\theoremstyle{plain}
\newtheorem{proposition}[theorem]{Proposition}
\newcommand*{\Quo}[2]
{\ensuremath{#1\!\raisebox{-.3ex}{/}\!\raisebox{-.65ex}{\ensuremath{#2}}}}
\newcommand*{\EnsembleQuotient}[2]%
{\ensuremath{%
    #1/\!\raisebox{-.65ex}{\ensuremath{#2}}}}
\DeclareMathAlphabet{\mathpzc}{OT1}{pzc}{m}{it}
\mathchardef\mhyphen="2D
\title{The Tutte-Grothendieck group of a convergent alphabetic rewriting system}
\author[1]{Laurent Poinsot}
\affil[1]{LIPN - UMR CNRS 7030\\
Universit\'e Paris Nord XIII\\
99 avenue J.-B. Cl\'ement\\
93430 Villetaneuse\\
France\\
  \texttt{laurent.poinsot@lipn.univ-paris13.fr}}
  \authorrunning{L. Poinsot} 
\subjclass{F.4.2 Thue systems}
\keywords{Semi-Thue system, semigroup, free partially commutative structure, Grothendieck group completion, universal invariant}  
\begin{document}

\maketitle

\begin{abstract}
The two operations, deletion and contraction of an edge, on multigraphs directly lead to the Tutte polynomial which satisfies a universal problem. As observed by  Brylawski~\cite{Brylawski} in terms of order relations,  these operations may be interpreted as a particular instance of a general theory which involves universal invariants like the Tutte polynomial, and a universal group, called the Tutte-Grothendieck group.  In this contribution, Brylawski's theory is extended in two ways: first of all, the order relation is replaced by a string rewriting system, and secondly, commutativity by partial commutations (that permits a kind of interpolation between non commutativity and full commutativity). This allows us to clarify the relations between the semigroup subject to rewriting and the Tutte-Grothendieck group: the later is actually the Grothendieck group completion of the former, up to the free adjunction of a unit (this was even not mention by Brylawski), and normal forms may be seen as universal invariants. Moreover we prove that such universal constructions are also possible in case of a non convergent rewriting system, outside the scope of Brylawski's work.  
 \end{abstract}
 
 \section{Introduction}
 
 In his paper~\cite{Tutte}, Tutte took advantage of two natural operations on (finite multi)graphs (actually on isomorphism classes of multigraphs), deletion and contraction of an edge, in order to introduce the ring $\mathbb{Z}[x,y]$ and a polynomial in two commuting variables $x,y$, also known by Whitney~\cite{Whitney}, unique up to isomorphism since solutions of a universal problem. This polynomial, since called the Tutte polynomial, is a graph invariant in at least two different meanings: first of all, it is defined on isomorphism classes, rather than on actual graphs, in such a way that two graphs with distinct Tutte polynomials are not isomorphic (a well-known functorial point of view), and, secondly, it is invariant with respect to a graph decomposition. Indeed, let $G$ be a graph, and let $e$ be an edge of $G$, which is not a loop (an edge with the same vertex as source and target) nor a bridge (an edge that connects two connected components of a graph).  The edge contraction $G/e$ of $G$ is the graph obtained by identifying the vertices source and target of $e$,  and removing the edge $e$. We write $G-e$ for the graph where the edge $e$ is merely removed; this operation is the edge deletion. Let us consider the graph $G/e+(G-e)$ (well-defined as isomorphic classes) which can be interpreted as a decomposition of $G$. Then, the Tutte polynomial $t$ is invariant with respect to this decomposition in the sense that $t(G)=t(G/e)+t(G-e)$. Moreover this decomposition eventually terminates with graphs with bridges and loops only as edges, and the choice of edges to decompose is irrelevant. 
 
 In his paper~\cite{Brylawski}, Brylawski observed that the previous construction (and many others, for instance the Tutte polynomial for matroids) may be explained in terms of an elegant and  unified categorical framework (namely a universal problem of invariants). In brief, Brylawski considered an abstract notion of decomposition. Let $X$ be a set, and let $\leq$ be an order relation on (a part of) the free commutative semigroup $X^{\oplus}$ (actually Brylawski considered multisets, nevertheless the choice is here made to deal with semigroups since they play a central r\^ole in this contribution), which satisfies a certain number of axioms that are quickly reviewed in informal terms below for the sake of completeness (Appendix~\ref{shortreviewBrylawski} contains a short review of Brylawski's theory in mathematical terms but it may be skipped) and to show how natural is  their translations in terms of rewriting systems.  
 
Let $D(X)$ be a set of formal (finite) sums $\sum_{1\leq i\leq k}n_i x_i$ where $x_i\in X$, $n_i\in\mathbb{N}$ not all of them being zero (an element of the free commutative semigroup $X^{\oplus}$ on $X$) partially ordered by $\leq$.  If $f,g\in D(X)$ such that $f\leq g$, then we say that $f$ \emph{decomposes} into $g$ or that $g$ is a \emph{decomposition} of $f$. Therefore $D(X)$ is seen as a set of commutative \emph{decompositions}. Elements of $X$ that belong to $D(X)$ are assumed to be minimal with respect to $\leq$. Elements of $X\cap D(X)$ that are maximal (and therefore incomparable since also minimal) are said to be \emph{irreducible}. According to a second axiom satisfied by the order relation $\leq$, an element $f\in D(X)$ cannot be decomposed further into any other element of $D(X)$ if, and only if, $f$ is a finite linear combination, with non negative integers as coefficients, of incomparable elements, that is, if $\mathsf{Irr}(X)$ is the set of all irreducibles, then $f$ is not decomposed into another element if, and only if, $f$ is a formal (finite) sum of elements of $\mathsf{Irr}(X)$ with non negative integers as coefficients. This property is similar to the notion of termination in rewriting systems.  Two other properties (\emph{refinability} and \emph{finiteness}) on $D(X)$ ensure that every element of $X$ has one, and only one, "terminal" decomposition into irreducible elements. They are equivalent to convergence of a rewriting system. For instance, the order $G<(G/e)+(G-e)$ on the free commutative semigroup generated by all (isomorphism classes of) finite graphs satisfies these axioms and properties. 
 
 Now, to a decomposition $(D(X),\leq)$ with the above properties may be attached a group in a universal way. A function $f$ from $X$ to an Abelian group $G$ is said to be \emph{invariant} if for every $x\in X$ such that $x\leq \sum_{1\leq i\leq k}n_i x_i$ ($x_i\in X$, and $n_i\in\mathbb{N}$), then $f(x)=\displaystyle n_i f(x_i)$. Recall here that Tutte polynomial $t$ is invariant because $t(G)=t(G/e)+t(G-e)$.  Brylawski proved the following theorem, which was his main result. There exist an Abelian group, called \emph{Tutte-Grothendieck group}, and an invariant mapping $t\colon X\rightarrow A$, called \emph{universal Tutte-Grothendieck invariant}, such that for every Abelian group $G$ and every invariant mapping $f\colon X \rightarrow G$, there exists a unique group homorphism $h\colon A \rightarrow G$ with $h\circ t = f$. In addition, $A$ is isomorphic to the free Abelian group with the irreducible elements as generators. In the classical context of graph theory, as expected, $A$ is the additive structure of $\mathbb{Z}[x,y]$ and $t$ is the Tutte polynomial. Many other decompositions enter in the scope of Brylawski's theory (see his paper~\cite{Brylawski}, examples and references therein).
 
 In the present contribution, we adapt Brylawski's results to the theory of (string) rewriting systems which we think is the natural framework to deal with theoretical notions of decomposition. Moreover we extend previous works by allowing non commutative, and even partially commutative,  decompositions. Our main result, theorem~\ref{mainstatement},  similar to Brylawski's main theorem, states the existence and uniqueness of a universal group and a universal invariant associated to some kind of string rewriting systems, even if there are not convergent (which is beyond the scope of Brylawski's work). In case of convergence, we prove that the universal group under consideration is the free partially commutative group generated by irreducible letters, which is a generalization of the original result, and that the universal invariant is nothing else than the normal form function that maps an element to its normal form.  We mention the fact that in this case, the universal group is proved to be the Grothendieck completion of a monoid (obtained from the semigroup subject to rewriting by free adjunction of an identity), which was not seen by Brylawski even if he called Tutte-Grothendieck his universal construction. 
 
 \section{Some universal constructions}

The categorical notions used in this contribution, that are not defined here, come from~\cite{MacLane}. This section is devoted to the presentation of  Grothendieck group completion and free partially commutative structures which are used here after. 

\subsection{Basic notions and some notations}\label{basics}

In what follows $\mathpzc{S}$, $\mathpzc{M}$ and $\mathpzc{G}$ denote the well-known categories of (small\footnote{"Small" refers to some given fixed universe, see~\cite{MacLane}.}) semigroups, monoids and groups respectively, with their usual arrows (the so-called \emph{homomorphisms of} semigroups, monoids or groups). 

Each of the categories $\mathpzc{S}$, $\mathpzc{M}$ and $\mathpzc{G}$ has a free object freely generated by a given (small) set. In other terms their forgetful functors to the category of sets have a left adjoint. In what follows we denote by $X^{+}$, $X^*$ and $F(X)$ respectively the free semigroup, monoid, group generated by $X$ (see~\cite{BouAlg}), and we identify $X$ as a subset of each of these algebraic structures. Note also that we denote by $X^{\oplus}$ the free commutative semigroup on $X$.

There are also obvious forgetful functors from $\mathpzc{G}$ to $\mathpzc{M}$, and from $\mathpzc{M}$ to $\mathpzc{S}$ (therefore also from $\mathpzc{G}$ to $\mathpzc{S}$ by composition).  Both of them have a left adjoint (see~\cite{MacLane}). The left adjoint of the forgetful functor from $\mathpzc{M}$ to $\mathpzc{S}$ is known to be the free adjunction $S^1=S\sqcup \{\,1\,\}$ of a unit to a semigroup $S$ in order to obtain a monoid in a natural way (the symbol "$\sqcup$" denotes the set-theoretical disjoint sum). The unit of this adjunction, $i_{\mathpzc{S},S}\colon x\in S\rightarrow x\in S^1$, which is an homomorphism of semigroups,  is obviously into. 

The forgetful functor from $\mathpzc{G}\rightarrow \mathpzc{M}$ has both a left and a right adjoint. Its right adjoint is given, at the object level, as a class mapping that associates a monoid to its group of invertible elements. Its left adjoint, more involved, is described below as group completion.

\subsection{Group completion}

The left adjoint of the forgetful functor from groups to monoids may be described as the (unique) solution of the following universal problem. Let $M$ be a monoid. Then there exists a unique group $\mathscr{G}(M)$, called the \emph{group completion} or \emph{universal enveloping group} or \emph{Grothendieck group} of $M$ (see~\cite{Wegge} and references therein, and also~\cite{MayAnno}), and a unique homomorphism of monoids $i_{\mathpzc{M},M}\colon M\rightarrow \mathscr{G}(M)$ such that for every group $G$ and every homomorphism of monoids $f\colon M\rightarrow G$, there is a unique homomorphism of groups $\widehat{f}\colon \mathscr{G}(M)\rightarrow M$ such that the following diagram commutes (in the category of monoids). 
\begin{equation}
\xymatrixcolsep{1pc}
\xymatrix{
M\ar[r]^{f} \ar[d]_{i_{\mathpzc{M},M}} & G\\
\mathscr{G}(M) \ar[ur]_{\widehat{f}} &
}
\end{equation}
It is not difficult to check that $\mathscr{G}(M)$ is given either as $\Quo{F(M)}{\langle I_M\rangle}$ where $I_M$ is the subset $\{\, mn(m*n)^{-1}\colon m,n\in M\,\}$ (where "$*$" is the monoid multiplication of  $M$, and where $F(X)$ denotes the free group of $X$, see Subsection~\ref{basics} and if $G$ is a group and $A$ is any subset of $G$, then $\langle A \rangle$ is the normal subgroup of $G$ generated by $A$), see~\cite{MayAnno}, or as the quotient monoid $\Quo{(M\sqcup M^{-1})^*}{\equiv_R}$ where $R=\{\, (mm^{-1},\epsilon)\colon m \in M\,\} \cup\{\,(m^{-1}m,\epsilon)\colon m\in M \,\}$ (here the star "$*$" stands for the free monoid functor, see also Subsection~\ref{basics}, and $\epsilon$ is the empty word) where $M^{-1}$ is the set of (formal) symbols $\{\, m^{-1}\colon m\in M\,\}$ equipotent to $M$.

\subsection{Free partially commutative structures}\label{fpcs}

Other universal problems, which will play an important r\^ole in what follows, are the free partially commutative structures. These structures have been introduced in~\cite{CartierFoata} (see also~\cite{Viennot}). A good review of these objects is~\cite{GHEDKrob}. Since such constructions may be performed in any of the categories of semigroups, monoids and groups, they are presented here in a generic way on a category $\mathpzc{C}\in\{\, \mathpzc{S},\mathpzc{M},\mathpzc{G}\,\}$ so that all statements make sense in any of these categories.

Let $X$ be a set and let $\theta\subseteq X\times X$ be a symmetric ({\it i.e.}, for every $x,y\in X$, $(x,y)\in\theta$ implies $(y,x)\in \theta$) and reflexive relation on $X$ ({\it i.e.}, for each $x\in X$, $(x,x)\not\in \theta$). Let $C$ be an object in $\mathpzc{C}$, and $f\colon X\rightarrow C$ be a set-theoretical mapping. This function is said to \emph{respect the commutations} whenever $(x,y)\in \theta$ then $f(x)f(y)=f(y)f(x)$, for every $x,y\in X$. A pair $(X,\theta)$ is called a {\it commutation alphabet}. 

It can be shown that there exists a unique object $\mathpzc{C}(X,\theta)$ of $\mathpzc{C}$ and a unique mapping $j_{\mathpzc{C},X}\colon X \rightarrow C(X,\theta)$ that respects the commutations such that for every object $C$ of $\mathpzc{C}$ and every mapping $f\colon X\rightarrow C$ that respects the commutations, there is a unique arrow (in $\mathpzc{C}$) $f^{\mathpzc{C}}\colon \mathpzc{C}(X,\theta)\rightarrow C$ such that the following diagram commutes in the category of sets. 
\begin{equation}
\xymatrixcolsep{1pc}
\xymatrix{
X\ar[r]^{f} \ar[d]_{j_{\mathpzc{C},X}} & C\\
\mathpzc{C}(X,\theta) \ar[ur]_{f^{\mathpzc{C}}} &
}
\end{equation}
The object $\mathpzc{C}(X,\theta)$ is usually called the \emph{free partially commutative semigroup} (respectively, \emph{monoid}, \emph{group}) on $X$ (or on $(X,\theta)$ to be more precise) depending on $\mathpzc{C}$, and may be constructed as follows: $\mathpzc{S}(X,\theta)=\Quo{X^{+}}{\equiv_{\theta}}$ and $\mathpzc{M}(X,\theta)=\Quo{X^*}{\equiv_{\theta}}$ where $\equiv_{\theta}$ is the \emph{congruence} on $X^{+}$ or $X^{*}$ \emph{generated by} $(xy,yx)$ whenever $(x,y)\in \theta$ for all $x,y\in X$ (the least congruence on $X^{+}$ or $X^*$ containing the relation $(xy,yx)$ whenever $(x,y)\in \theta$ for all $x,y\in X$, see~\cite{Cliff}), and $\mathpzc{G}(X,\theta)=\Quo{F(X,\theta)}{\langle \{\,xyx^{-1}y^{-1}\colon (x,y) \in \theta\,\} \rangle}$. 

We may note that $\mathpzc{C}(X,\emptyset)$ is nothing else than the usual free (non commutative) object in the category $\mathpzc{C}$, while $\mathpzc{C}(X,(X\times X)\setminus \Delta_X)$, where $\Delta_X$ is the equality relation on $X$, is the free commutative object in $\mathpzc{C}$ (in particular, $\mathpzc{S}(X,(X\times X)\setminus \Delta_X)=X^{\oplus}$ is the free commutative semigroup).


We may clarify the relations between the free partially commutative structures. Using universal properties, it is not difficult to check that  $\mathpzc{M}(X,\theta)$ is isomorphic to $\mathpzc{S}(X,\theta)^1$  (actually $\mathpzc{M}(X,\theta)=\mathpzc{S}(X,\theta)\cup\{\,\epsilon\,\}$, where $\epsilon$ is the empty word) in such a way that $\mathpzc{S}(X,\theta)$ embeds in $\mathpzc{M}(X,\theta)$ as a sub-semigroup. 
\begin{lemma}\label{adjunctionunitfpcs}
The monoid $\mathpzc{M}(X,\theta)$ is isomorphic to the free adjunction $\mathpzc{S}(X,\theta)^1$ of an identity to the semigroup $\mathpzc{S}(X,\theta)$.
\end{lemma}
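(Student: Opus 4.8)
The plan is to use the two universal properties directly, constructing mutually inverse homomorphisms of monoids between $\mathpzc{M}(X,\theta)$ and $\mathpzc{S}(X,\theta)^1$. Both objects are monoids, and a natural candidate for a map in each direction comes from the respective universal property; the bulk of the work is checking that the relevant set-theoretic maps are well-defined and land in the right place, then invoking uniqueness to conclude they are mutually inverse.

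\begin{proof}
Recall that $\mathpzc{S}(X,\theta)^1 = \mathpzc{S}(X,\theta)\sqcup\{\,1\,\}$ is a monoid, with the semigroup inclusion $i_{\mathpzc{S},\mathpzc{S}(X,\theta)}\colon\mathpzc{S}(X,\theta)\hookrightarrow\mathpzc{S}(X,\theta)^1$, and that $j_{\mathpzc{S},X}\colon X\rightarrow\mathpzc{S}(X,\theta)$ respects the commutations.

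First I would produce a monoid homomorphism $\varphi\colon\mathpzc{M}(X,\theta)\rightarrow\mathpzc{S}(X,\theta)^1$. The composite $i_{\mathpzc{S},\mathpzc{S}(X,\theta)}\circ j_{\mathpzc{S},X}\colon X\rightarrow\mathpzc{S}(X,\theta)^1$ is a set-theoretic map that still respects the commutations (since $i_{\mathpzc{S},\mathpzc{S}(X,\theta)}$ is a semigroup homomorphism, it preserves the identities $j_{\mathpzc{S},X}(x)j_{\mathpzc{S},X}(y)=j_{\mathpzc{S},X}(y)j_{\mathpzc{S},X}(x)$ whenever $(x,y)\in\theta$). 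Applying the universal property of $\mathpzc{M}(X,\theta)$ in the category $\mathpzc{M}$ yields a unique monoid homomorphism $\varphi\colon\mathpzc{M}(X,\theta)\rightarrow\mathpzc{S}(X,\theta)^1$ with $\varphi\circ j_{\mathpzc{M},X}=i_{\mathpzc{S},\mathpzc{S}(X,\theta)}\circ j_{\mathpzc{S},X}$.

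Next I would produce a monoid homomorphism $\psi\colon\mathpzc{S}(X,\theta)^1\rightarrow\mathpzc{M}(X,\theta)$. The map $j_{\mathpzc{M},X}\colon X\rightarrow\mathpzc{M}(X,\theta)$ respects the commutations, and its image generates a sub-semigroup of the monoid $\mathpzc{M}(X,\theta)$; restricting the codomain, the universal property of $\mathpzc{S}(X,\theta)$ in the category $\mathpzc{S}$ gives a unique semigroup homomorphism $\mathpzc{S}(X,\theta)\rightarrow\mathpzc{M}(X,\theta)$ extending $j_{\mathpzc{M},X}$. This semigroup homomorphism extends uniquely to a monoid homomorphism $\psi\colon\mathpzc{S}(X,\theta)^1\rightarrow\mathpzc{M}(X,\theta)$ by the universal property of the free adjunction of a unit (equivalently, just send $1\mapsto\epsilon$ and check compatibility).

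Finally I would verify $\psi\circ\varphi=\mathrm{id}_{\mathpzc{M}(X,\theta)}$ and $\varphi\circ\psi=\mathrm{id}_{\mathpzc{S}(X,\theta)^1}$ by a uniqueness argument. For the first, $\psi\circ\varphi$ is a monoid endomorphism of $\mathpzc{M}(X,\theta)$ satisfying $(\psi\circ\varphi)\circ j_{\mathpzc{M},X}=j_{\mathpzc{M},X}$, and so does $\mathrm{id}_{\mathpzc{M}(X,\theta)}$; by the uniqueness clause in the universal property of $\mathpzc{M}(X,\theta)$ (applied with $C=\mathpzc{M}(X,\theta)$ and $f=j_{\mathpzc{M},X}$), the two coincide. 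For the second, $\varphi\circ\psi$ restricts on $\mathpzc{S}(X,\theta)$ to a semigroup endomorphism fixing $j_{\mathpzc{S},X}(X)$ pointwise, hence equals the identity there by uniqueness in the universal property of $\mathpzc{S}(X,\theta)$, and it sends $1\mapsto 1$; since every element of $\mathpzc{S}(X,\theta)^1$ is either $1$ or lies in $\mathpzc{S}(X,\theta)$, we get $\varphi\circ\psi=\mathrm{id}$. Thus $\varphi$ is an isomorphism of monoids carrying $\mathpzc{S}(X,\theta)$ onto the sub-semigroup $j_{\mathpzc{M},X}(X)^+$ of $\mathpzc{M}(X,\theta)$, i.e. onto $\mathpzc{M}(X,\theta)\setminus\{\,\epsilon\,\}$. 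The only mildly delicate point is bookkeeping the two-step extension (semigroup map first, then adjoin the unit) and making sure the image of $\mathpzc{S}(X,\theta)$ genuinely misses $\epsilon$ so that the unit is ``freely'' adjoined; this follows because $\mathpzc{S}(X,\theta)=\mathpzc{M}(X,\theta)\setminus\{\,\epsilon\,\}$ already observed before the lemma.
\end{proof}
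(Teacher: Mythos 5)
Your proof is correct, but it takes a different route from the paper's. The paper's proof shows that $\mathpzc{M}(X,\theta)$, equipped with the canonical semigroup homomorphism $I\colon \mathpzc{S}(X,\theta)\to\mathpzc{M}(X,\theta)$ induced by $id_X$, is itself a solution of the universal problem defining $\mathpzc{S}(X,\theta)^1$: given any monoid $M$ and any semigroup homomorphism $f\colon\mathpzc{S}(X,\theta)\to M$, one restricts to generators, extends by the universal property of $\mathpzc{M}(X,\theta)$, and checks the triangle commutes; the isomorphism then follows by the general uniqueness of solutions to a universal problem. You instead build explicit morphisms $\varphi$ and $\psi$ in both directions from the respective universal properties and verify by uniqueness that both composites are identities. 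The two arguments use the same ingredients, but yours unwinds the abstract ``universal objects are unique up to iso'' step into a concrete back-and-forth, which is slightly longer but exhibits the isomorphism directly, whereas the paper's version is more economical and closer to the standard categorical template. One small remark: your closing worry about whether $\psi$ restricted to $\mathpzc{S}(X,\theta)$ genuinely misses $\epsilon$ is not actually needed to conclude $\varphi\circ\psi=\mathrm{id}$ — the uniqueness clause in the universal property of $\mathpzc{S}(X,\theta)$ already forces $\varphi\circ\psi$ to agree with $i_{\mathpzc{S},\mathpzc{S}(X,\theta)}$ on the copy of $\mathpzc{S}(X,\theta)$ regardless of where $\psi$ lands, and $1\mapsto 1$ handles the remaining element. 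Also, ``restricting the codomain'' when defining $\psi$ on $\mathpzc{S}(X,\theta)$ is unnecessary: the universal property can be applied directly with the semigroup underlying $\mathpzc{M}(X,\theta)$ as target.
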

\begin{proof}
To prove this lemma it is sufficient to check that $\mathpzc{M}(X,\theta)$ is a solution of the universal problem of adjunction of a unit to $\mathpzc{S}(X,\theta)$. According to the universal problem of the free partially commutative semigroup $\mathpzc{S}(X,\theta)$, there is a unique homomorphism of semigroups $I\colon \mathpzc{S}(X,\theta)\rightarrow \mathpzc{M}(X,\theta)$ such that the following diagram is commutative.
\begin{equation}
\xymatrixcolsep{1pc}
\xymatrix{
X\ar[r]^{id_X} \ar[d]_{j_{\mathpzc{S},X}}& X \ar[d]^{j_{\mathpzc{M},X}}\\
\mathpzc{S}(X,\theta) \ar[r]_{I} & \mathpzc{M}(X,\theta)
}
\end{equation}
Now, let $M$ be a monoid and $f\colon \mathpzc{S}(X,\theta)\rightarrow M$ be a semigroup homomorphism. Therefore there exists $f_0\colon X\rightarrow M$ that respects the commutations and such that $f_0^{\mathpzc{S}}\circ j_{\mathpzc{S},X}=f$. According to the universal problem attached to $\mathpzc{M}(X,\theta)$, there is a unique homomorphism of monoids $f_0^{\mathpzc{M}}\colon \mathpzc{M}(X,\theta)\rightarrow f_0$ such that $f_0^{\mathpzc{M}}\circ j_{\mathpzc{M},X}=f_0$. Therefore, $f_0^{\mathpzc{M}}\circ I\circ  j_{\mathpzc{S},X}=f^0$, but then $f_0^{\mathpzc{M}}\circ I=f_0^{\mathpzc{S}}=f$. The relations between all the arrows are summarized in the following commutative diagram. 
\begin{equation}
\xymatrixcolsep{1pc}
\xymatrix{
X\ar[rr]^{id_X} \ar[d]^{j_{\mathpzc{S},X}}\ar@/_3.3pc/[ddr]_{f_0}& &X \ar[d]^{j_{\mathpzc{M},X}}\\
\mathpzc{S}(X,\theta) \ar[rr]_{I} \ar[rd]_{f}& &\mathpzc{M}(X,\theta)\ar[ld]^{f_0^{\mathpzc{M}}}\\
& M &
}
\end{equation}
\end{proof}
There is also an important relation between $\mathpzc{G}(X,\theta)$ and $\mathpzc{M}(X,\theta)$ given in the following lemma. 
\begin{lemma}\label{Grothendieckoffpcg}
Let $(X,\theta)$ be a commutation alphabet. Then, $\mathpzc{G}(X,\theta)$ is (isomorphic to) the universal enveloping group $\mathscr{G}(\mathpzc{M}(X,\theta))$ of $\mathpzc{M}(X,\theta)$. 
\end{lemma}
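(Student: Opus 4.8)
The plan is to show that $\mathpzc{G}(X,\theta)$, equipped with a suitable monoid homomorphism from $\mathpzc{M}(X,\theta)$, solves the universal problem that defines the group completion $\mathscr{G}(\mathpzc{M}(X,\theta))$; the asserted isomorphism then follows from uniqueness of solutions to universal problems. The whole argument is a diagram chase steered by the two universal properties (group completion, and the free partially commutative structures), together with the elementary observation that the composite of a commutation-respecting map with a homomorphism again respects the commutations.

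First I would produce the comparison morphism. The canonical map $j_{\mathpzc{G},X}\colon X\rightarrow\mathpzc{G}(X,\theta)$ respects the commutations by construction; regarding $\mathpzc{G}(X,\theta)$ as a monoid via the forgetful functor $\mathpzc{G}\rightarrow\mathpzc{M}$, the universal property of $\mathpzc{M}(X,\theta)$ yields a unique monoid homomorphism $i\colon\mathpzc{M}(X,\theta)\rightarrow\mathpzc{G}(X,\theta)$ with $i\circ j_{\mathpzc{M},X}=j_{\mathpzc{G},X}$. This $i$ will play the role of the unit $i_{\mathpzc{M},\mathpzc{M}(X,\theta)}$.

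Next, given an arbitrary group $G$ and an arbitrary monoid homomorphism $f\colon\mathpzc{M}(X,\theta)\rightarrow G$, the map $f\circ j_{\mathpzc{M},X}\colon X\rightarrow G$ respects the commutations (since $f$ is a homomorphism), so the universal property of $\mathpzc{G}(X,\theta)$ provides a unique group homomorphism $\widehat{f}\colon\mathpzc{G}(X,\theta)\rightarrow G$ with $\widehat{f}\circ j_{\mathpzc{G},X}=f\circ j_{\mathpzc{M},X}$. To obtain $\widehat{f}\circ i=f$ I observe that both sides are monoid homomorphisms $\mathpzc{M}(X,\theta)\rightarrow G$ which, precomposed with $j_{\mathpzc{M},X}$, give $f\circ j_{\mathpzc{M},X}$; hence they agree by the uniqueness clause of the universal property of $\mathpzc{M}(X,\theta)$. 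For uniqueness of $\widehat{f}$: if $g\colon\mathpzc{G}(X,\theta)\rightarrow G$ is a group homomorphism with $g\circ i=f$, then $g\circ j_{\mathpzc{G},X}=g\circ i\circ j_{\mathpzc{M},X}=f\circ j_{\mathpzc{M},X}$, so $g=\widehat{f}$ by the uniqueness clause of the universal property of $\mathpzc{G}(X,\theta)$.

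Having checked that $(\mathpzc{G}(X,\theta),i)$ satisfies precisely the universal property characterizing $(\mathscr{G}(\mathpzc{M}(X,\theta)),i_{\mathpzc{M},\mathpzc{M}(X,\theta)})$, I would conclude with the standard uniqueness-up-to-unique-isomorphism argument. There is no genuine obstacle: the only points needing care are tracking in which category (monoids versus groups) each arrow is being extended when invoking a universal property, and the disciplined use of the two uniqueness clauses, first for $\mathpzc{M}(X,\theta)$ to get $\widehat{f}\circ i=f$ and then for $\mathpzc{G}(X,\theta)$ to get uniqueness of $\widehat{f}$. A more computational alternative would be to compare the explicit presentations $\mathpzc{G}(X,\theta)=\Quo{F(X,\theta)}{\langle\{\,xyx^{-1}y^{-1}\colon(x,y)\in\theta\,\}\rangle}$ and $\mathscr{G}(M)=\Quo{F(M)}{\langle I_M\rangle}$, but the universal-property route is shorter and I would write that one out.
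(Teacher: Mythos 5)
Your proof is correct and follows essentially the same route as the paper: construct the comparison morphism $\mathpzc{M}(X,\theta)\to\mathpzc{G}(X,\theta)$ from the universal property of the free partially commutative monoid, then verify the group-completion universal property by factoring an arbitrary monoid homomorphism $f$ through the universal property of $\mathpzc{G}(X,\theta)$. If anything you are a bit more careful than the paper, which leaves the uniqueness of $\widehat{f}$ implicit, whereas you spell it out via the uniqueness clause for $\mathpzc{G}(X,\theta)$.
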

\begin{proof}
The set-theoretical mapping $j_{\mathpzc{G},X}\colon X \rightarrow \mathpzc{G}(X,\theta)$ respects the commutations, therefore according to the universal problem of the free partially commutative monoid over $(X,\theta)$ there is a unique homomorphism of monoids $j_{\mathpzc{G},X}^{\mathpzc{M}}$ that makes commute the following diagram. 
\begin{equation}
\xymatrixcolsep{1pc}
\xymatrix{
X\ar[r]^{j_{\mathpzc{G},X}} \ar[d]_{j_{\mathpzc{M},X}} & \mathpzc{G}(X,\theta)\\
\mathpzc{M}(X,\theta) \ar[ur]_{j_{\mathpzc{G},X}^{\mathpzc{M}}} &
}
\end{equation}
Now, let $G$ be any group, and $f\colon \mathpzc{M}(X,\theta)\rightarrow G$ be an homomorphism of monoids. Then, according to the universal problem of the free partially commutative monoid, there is a unique set-theoretical mapping $f_0\colon X \rightarrow G$ that respects the commutations and $f\circ j_{\mathpzc{M},X}=f_0$. Now according to universal problem of $\mathpzc{G}(X,\theta)$, $f_0$ is uniquely extended as a group homomorphism $f_0^{\mathpzc{G}}\colon \mathpzc{G}(X,\theta)\rightarrow G$ such that $f_0^{\mathpzc{G}}\circ j_{\mathpzc{G},X}=f_0$. Therefore, $f_0^{\mathpzc{G}}\circ j_{\mathpzc{G},X}^{\mathpzc{M}}\circ j_{\mathpzc{M},X}=f_0^{\mathpzc{G}}\circ j_{\mathpzc{G},X}=f_0=f\circ j_{\mathpzc{M},X}$ so that $f_0^{\mathpzc{G}}\circ j_{\mathpzc{G},X}^{\mathpzc{M}}=f$ (by uniqueness of a solution of a universal problem). Therefore $(\mathpzc{G}(X,\theta),f_0^{\mathpzc{G}})$ is a solution of the universal problem of the group completion $\mathscr{G}(\mathpzc{M}(X,\theta))$ of $\mathpzc{M}(X,\theta)$. The relations between all the arrows are summarized in the following commutative diagram. 
\begin{equation}
\xymatrixcolsep{1pc}
\xymatrix{
X\ar[r]^{j_{\mathpzc{G},X}} \ar[d]_{j_{\mathpzc{M},X}}\ar@/_2.5pc/[dd]_{f_0} & \mathpzc{G}(X,\theta) \ar@/^1pc/[ldd]^{f^{\mathpzc{G}}_0}\\
\mathpzc{M}(X,\theta) \ar[d]_{f} \ar[ur]_{j_{\mathpzc{G},X}^{\mathpzc{M}}} &\\
G & 
}
\end{equation}
\end{proof}
Actually a result from~\cite{Db} page 66 (see also~\cite{GHEDKrob}) states that the natural mapping $j_{\mathpzc{G},X}^{\mathpzc{M}}$ of the proof of lemma~\ref{Grothendieckoffpcg} is one-to-one so that $\mathpzc{M}(X,\theta)$ may be identified with a sub-monoid of its Grothendieck completion $\mathpzc{G}(X,\theta)$. 

\begin{definition}
Let $X$ be any set. For every $x\in X$ and every $w\in X^{*}$, let us define $|w|_x$ as the number of occurrences of the letter $x$ in the word $w$. More precisely, if $\epsilon$ is the empty word, then $|\epsilon|_x=0$, $|y|_x=0$ if $y\not=x$, $|y|_x=1$ if $y=x$ for all $y\in X$, and if the length of $w\in X^{*}$ is $>1$, then $w=yw^{\prime}$ for some letter $y\in X$, and $w^{\prime}\in X^{+}$, then $|w|_x=|y|_x+|w^{\prime}|_x$. Let $\equiv$ be a congruence on $X^{+}$ or $X^{*}$. It is said to be \emph{multi-homogeneous} if for every $w,w^{\prime}$ in $X^{+}$ or $X^{*}$, such that $w\equiv w^{\prime}$, then for every $x\in X$, $|w|_x=|w^{\prime}|_x$. Therefore we may define $|[w]_{\equiv}|_x=|w|_x$ for the class $[w]_{\equiv}$ of $w$ modulo $\equiv$ (it does not depend on the representative of the class modulo $\equiv$).
\end{definition}
According to~\cite{GHEDKrob}, any congruence of the form $\equiv_{\theta}$ is a multi-homogenous congruence, so that we may define $|w|_x$ for all $w\in \mathpzc{C}(X,\theta)$ and all $x\in X$ (where $\mathpzc{C}=\mathpzc{S}$ or $\mathpzc{M}$). The notion of multi-homogeneity is used to check that we may identify the alphabet $X$ has a generating set of $\mathpzc{C}(X,\theta)$ using the map $j_{\mathpzc{C},X}$, which is shown to be into, in such a way that we consider that $X\subseteq \mathpzc{C}(X,\theta)$.  Indeed, for semigroup or monoid case, let $x,y\in X$ such that their classes modulo $\equiv_{\theta}$ be equal. But $\equiv_{\theta}$ is a multi-homogenous congruence (see~\cite{GHEDKrob}). Therefore $x=y$. Concerning the group case, let us assume that $x,y\in X$ are equivalent modulo the normal subgroup $N_{\theta}=\langle \{\,xyx^{-1}y^{-1}\colon (x,y) \in \theta\,\} \rangle$ so that there is some $w\in N_{\theta}$ with $xy^{-1}=w$. Because the group is free, it means that $x=y$ (no non trivial relations between the generators). In the sequel, we will treat $X$ as a subset of $\mathpzc{C}(X,\theta)$. 

More generally, let $(X,\theta)$ be a commutation alphabet and let $Y\subseteq X$. We define $\theta_Y=\theta\cap (Y\times Y)$. It is possible to embed $\mathpzc{C}(Y,\theta_Y)$ into $\mathpzc{C}(X,\theta)$ as illustrated in the following lemma.
\begin{lemma}\label{identification}
Under the previous assumptions, there is an arrow $J\colon \mathpzc{C}(Y,\theta_Y)\rightarrow \mathpzc{C}(X,\theta)$ in the category $\mathpzc{C}$ which is into. 
\end{lemma}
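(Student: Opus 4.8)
The plan is to build the embedding $J$ directly from the universal properties of free partially commutative structures, exactly as in the proofs of Lemmas~\ref{adjunctionunitfpcs} and~\ref{Grothendieckoffpcg}, and then to verify injectivity using the multi-homogeneity of the congruences $\equiv_{\theta}$. First I would obtain the arrow itself: the inclusion $Y\hookrightarrow X$ composed with $j_{\mathpzc{C},X}\colon X\rightarrow \mathpzc{C}(X,\theta)$ gives a set map $Y\rightarrow \mathpzc{C}(X,\theta)$, and since $\theta_Y=\theta\cap(Y\times Y)\subseteq\theta$, this map respects the commutations of $(Y,\theta_Y)$ (if $(x,y)\in\theta_Y$ then $(x,y)\in\theta$, so the images commute in $\mathpzc{C}(X,\theta)$). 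By the universal problem defining $\mathpzc{C}(Y,\theta_Y)$, there is then a unique arrow $J\colon \mathpzc{C}(Y,\theta_Y)\rightarrow\mathpzc{C}(X,\theta)$ in $\mathpzc{C}$ with $J\circ j_{\mathpzc{C},Y}=j_{\mathpzc{C},X}|_Y$.

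The substance is showing $J$ is into, and here I would split into the semigroup/monoid case and the group case, mirroring the treatment already used in the excerpt to prove $j_{\mathpzc{C},X}$ is injective. For $\mathpzc{C}\in\{\mathpzc{S},\mathpzc{M}\}$: elements of $\mathpzc{C}(Y,\theta_Y)$ are classes $[w]_{\equiv_{\theta_Y}}$ with $w\in Y^{+}$ (resp.\ $Y^{*}$), and $J$ sends this to $[w]_{\equiv_{\theta}}$, where now $w$ is read inside $X^{+}$ (resp.\ $X^{*}$). So I must show: if $w,w'\in Y^{*}$ and $w\equiv_{\theta}w'$, then $w\equiv_{\theta_Y}w'$. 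The point is that a single elementary $\theta$-commutation applied to a word whose letters all lie in $Y$ can only swap two adjacent letters that are both in $Y$, hence is already an elementary $\theta_Y$-commutation; inducting on the length of a derivation $w\equiv_{\theta}w'$ then gives $w\equiv_{\theta_Y}w'$. (Alternatively, one can invoke the standard trace-theory fact from~\cite{GHEDKrob} that $u\equiv_{\theta}v$ iff $|u|_x=|v|_x$ for all $x$ and the induced orders on occurrences of non-commuting letter pairs agree, which is clearly an intrinsic condition on the sub-alphabet.) For the group case $\mathpzc{C}=\mathpzc{G}$: an element of $\mathpzc{G}(Y,\theta_Y)$ in the kernel of $J$ is represented by $w\in F(Y)$ with, viewed in $F(X)$, $w\in N_{\theta}=\langle\{xyx^{-1}y^{-1}:(x,y)\in\theta\}\rangle$; I want to conclude $w\in N_{\theta_Y}=\langle\{xyx^{-1}y^{-1}:(x,y)\in\theta_Y\}\rangle$, so that $w$ is trivial in $\mathpzc{G}(Y,\theta_Y)$. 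The cleanest route is to use the retraction $F(X)\rightarrow F(Y)$ induced by sending letters of $Y$ to themselves and letters of $X\setminus Y$ to $1$: this homomorphism carries each generator $xyx^{-1}y^{-1}$ of $N_{\theta}$ either to a generator of $N_{\theta_Y}$ (when $x,y\in Y$) or to $1$ (otherwise), hence maps $N_{\theta}$ into $N_{\theta_Y}$; since it fixes $w$ (all its letters are in $Y$), we get $w\in N_{\theta_Y}$, as desired.

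I would then assemble these: $J$ is a well-defined $\mathpzc{C}$-arrow by the universal property, and the computations above show its restriction to underlying sets is injective, hence $J$ is into. I would also record the compatibility $J\circ j_{\mathpzc{C},Y}=j_{\mathpzc{C},X}|_Y$, which is what lets us regard $\mathpzc{C}(Y,\theta_Y)$ concretely as the sub-object of $\mathpzc{C}(X,\theta)$ generated by $Y$, consistently with the identification $Y\subseteq X\subseteq\mathpzc{C}(X,\theta)$ set up just before the lemma.

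The main obstacle is the injectivity argument, and within it the group case: unlike the semigroup/monoid situation, where one can argue entirely syntactically about words over $Y$ and appeal to multi-homogeneity, in $\mathpzc{G}(Y,\theta_Y)$ one cannot restrict attention to positive words, so the naive length-of-derivation induction is awkward because reduced representatives involve inverses and free-group cancellation interacts with the commutator relations. The retraction trick $F(X)\twoheadrightarrow F(Y)$ sidesteps this cleanly, provided one checks (routine) that it does send the normal closure $N_{\theta}$ into $N_{\theta_Y}$ — which follows because a homomorphism maps the normal closure of a set $S$ into the normal closure of the image of $S$. Everything else is a direct application of the universal properties already in hand.
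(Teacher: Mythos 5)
Your construction of $J$ from the universal property coincides with the paper's; the two proofs diverge on how they establish injectivity. The paper argues uniformly over $\mathpzc{S},\mathpzc{M},\mathpzc{G}$: fix $w_0\in\mathpzc{C}(Y,\theta_Y)$, let $\pi_{w_0}$ send each $y\in Y$ to itself and each $x\in X\setminus Y$ to $w_0$, extend by the universal property of $\mathpzc{C}(X,\theta)$ to a $\mathpzc{C}$-arrow $\Pi_{w_0}$, and conclude $\Pi_{w_0}\circ J=\mathrm{id}$ by uniqueness, so $J$ is into. You instead split into cases: for $\mathpzc{S}$ and $\mathpzc{M}$ you argue syntactically, via multi-homogeneity, that $\equiv_{\theta}$ restricted to words over $Y$ is exactly $\equiv_{\theta_Y}$; for $\mathpzc{G}$ you retract $F(X)\rightarrow F(Y)$ by sending the letters of $X\setminus Y$ to $1$ and observe that this carries $N_{\theta}$ into $N_{\theta_Y}$. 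The paper's argument is shorter and category-agnostic, but it has a hidden hypothesis: for $\pi_{w_0}$ to respect the $\theta$-commutations, $w_0$ must commute in $\mathpzc{C}(Y,\theta_Y)$ with every $y\in Y$ that $\theta$-commutes with some $x\in X\setminus Y$. Choosing $w_0$ to be the identity settles this for monoids and groups, but $\mathpzc{S}(Y,\theta_Y)$ has no identity and, in general, no element commuting with all of $Y$, so the paper's retraction needs a patch in the semigroup case (e.g.\ reducing to the monoid case via Lemma~\ref{adjunctionunitfpcs}). Your syntactic treatment of $\mathpzc{S}$ and $\mathpzc{M}$, and your choice of $1\in F(Y)$ as the image of $X\setminus Y$ in the group case, avoid this issue entirely --- which is the substantive difference in what the two approaches buy.
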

\begin{proof}
Let $\mathsf{incl}\colon Y \rightarrow X$ be the canonical inclusion. Define $J\colon \mathpzc{C}(Y,\theta_Y)\rightarrow \mathpzc{C}(X,\theta)$ as the unique arrow (in $\mathpzc{C}$) such that the following diagram commutes.
\begin{equation}
\xymatrixcolsep{2pc}
\xymatrix{
Y\ar[r]^{\mathsf{incl}} \ar[d]_{j_{\mathpzc{C},Y}} & X \ar[d]_{j_{\mathpzc{C},X}}\\
\mathpzc{C}(Y,\theta_Y) \ar[r]_{J}& \mathpzc{C}(X,\theta)}
\end{equation}
Therefore, $J\circ j_{\mathpzc{C},Y}=j_{\mathpzc{C},X}\circ\mathsf{incl}$. 

Let $w_0\in \mathpzc{C}(Y,\theta_Y)$. Let us define $\pi_{w_0}\colon X \rightarrow Y$ such that $\pi_{w_0}(y)=y$ for every $y\in Y\subseteq X$, and $\pi_{w_0}(x)=w_0$ for $x\in X\setminus Y$. We note that $\pi_{w_0}\circ \mathsf{incl}=id_Y$. Then we may consider $\Pi_{w_0}\colon \mathpzc{C}(X,\theta)\rightarrow \mathpzc{C}(Y,\theta_Y)$ as the unique arrow (in $\mathpzc{C}$) that makes commute the following diagram. 
\begin{equation}
\xymatrixcolsep{2pc}
\xymatrix{
X\ar[r]^{\pi_{w_0}} \ar[d]_{j_{\mathpzc{C},X}} & Y \ar[d]_{j_{\mathpzc{C},Y}}\\
\mathpzc{C}(X,\theta) \ar[r]_{\Pi_{w_0}}& \mathpzc{C}(Y,\theta_Y)}
\end{equation}
Therefore $\Pi_{w_0}\circ j_{\mathpzc{C},X}=j_{\mathpzc{C},Y}\circ \pi_{w_0}$. Now, $\Pi_{w_0}\circ J\circ j_{\mathpzc{C},Y}=\Pi_{w_0}\circ j_{\mathpzc{C},X}\circ \mathsf{incl}=j_{\mathpzc{C},Y}\circ \pi_{w_0}\circ \mathsf{incl}=j_{\mathpzc{C},Y}\circ id_{Y}=j_{\mathpzc{C},Y}=id_{\mathpzc{C}(Y,\theta_Y)}\circ j_{\mathpzc{C},Y}$, so that (by uniqueness) $\Pi_{w_0}\circ J=id_{\mathpzc{C}(Y,\theta_Y)}$, and then $J$ is into (and $\Pi_{w_0}$ is onto). 
\end{proof}
According to lemma~\ref{identification} we identify $\mathpzc{C}(Y,\theta_Y)$ as a sub-semigroup, sub-monoid or sub-group (depending on the choice of $\mathpzc{C}$) of $\mathpzc{C}(X,\theta)$. In such situations we may use the following characterization.
\begin{lemma}\label{characterization}
Let $(X,\theta)$ be a commutation alphabet, and let $Y\subseteq X$ be any subset. Let $w\in \mathpzc{S}(X,\theta)$. The following statements are equivalent:
\begin{enumerate}
\item $w\in \mathpzc{S}(Y,\theta_Y)$.
\item For all $x\in X$, $|w|_x\not=0$ implies that $x\in Y$. 
\end{enumerate} 
\end{lemma}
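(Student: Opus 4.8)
The plan is to reduce everything to the explicit models $\mathpzc{S}(X,\theta)=\Quo{X^{+}}{\equiv_{\theta}}$ and $\mathpzc{S}(Y,\theta_Y)=\Quo{Y^{+}}{\equiv_{\theta_Y}}$, and to exploit the multi-homogeneity of congruences of the form $\equiv_{\theta}$ (recalled above, after~\cite{GHEDKrob}) together with the description of the embedding $J\colon\mathpzc{S}(Y,\theta_Y)\rightarrow\mathpzc{S}(X,\theta)$ from lemma~\ref{identification}. Observe first that $(Y,\theta_Y)$ is again a commutation alphabet, so $\equiv_{\theta_Y}$ is a multi-homogeneous congruence on $Y^{+}$; in particular, for $u\in Y^{+}$ the class $[u]_{\equiv_{\theta_Y}}$ is a well-defined element of $\mathpzc{S}(Y,\theta_Y)$ whose occurrence counts $|[u]_{\equiv_{\theta_Y}}|_y$ are those of $u$ and do not depend on the chosen representative.

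The single computational fact I would establish first is: for every word $u\in Y^{+}$, reading $u$ also as an element of $X^{+}$ via the inclusion $Y\subseteq X$, one has $J([u]_{\equiv_{\theta_Y}})=[u]_{\equiv_{\theta}}$. This is immediate from the defining square $J\circ j_{\mathpzc{S},Y}=j_{\mathpzc{S},X}\circ\mathsf{incl}$ of lemma~\ref{identification}: writing $u=y_1\cdots y_n$ with $y_i\in Y$, the respective quotient maps send $u$ to the product $j_{\mathpzc{S},Y}(y_1)\cdots j_{\mathpzc{S},Y}(y_n)$ in $\mathpzc{S}(Y,\theta_Y)$ and to $j_{\mathpzc{S},X}(y_1)\cdots j_{\mathpzc{S},X}(y_n)$ in $\mathpzc{S}(X,\theta)$, and since $J$ is a semigroup homomorphism it carries the former to the latter. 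Combined with multi-homogeneity of $\equiv_{\theta}$, this yields $|J([u]_{\equiv_{\theta_Y}})|_x=|u|_x$ for every $x\in X$, a quantity that vanishes as soon as $x\in X\setminus Y$ (and equals $|[u]_{\equiv_{\theta_Y}}|_x$ when $x\in Y$).

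With this in hand the two implications are one-liners. For $(1)\Rightarrow(2)$: if $w\in\mathpzc{S}(Y,\theta_Y)$, write $w=J([u]_{\equiv_{\theta_Y}})$ for some $u\in Y^{+}$; the preliminary fact gives $|w|_x=|u|_x=0$ for every $x\notin Y$, which is exactly the contrapositive of~(2). Conversely, for $(2)\Rightarrow(1)$: choose any representative $u\in X^{+}$ of $w$, so $|u|_x=|w|_x$ for all $x$ by multi-homogeneity of $\equiv_{\theta}$; assumption~(2) forces every letter occurring in $u$ to lie in $Y$, i.e. $u\in Y^{+}$, and then $J([u]_{\equiv_{\theta_Y}})=[u]_{\equiv_{\theta}}=w$, so $w$ lies in the image of $J$, hence (under the identification of lemma~\ref{identification}) in $\mathpzc{S}(Y,\theta_Y)$. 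The only point needing any care is the preliminary fact, namely verifying that the abstract arrow $J$ produced by the universal property acts letter by letter on representative words; once that is settled, everything else is a direct consequence of multi-homogeneity, and no property specific to $\theta$ or to the semigroup case (as opposed to $\mathpzc{M}$) is used beyond it.
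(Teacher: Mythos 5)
Your proof is correct and follows essentially the same route as the paper's: reduce to representative words in $X^{+}$ and $Y^{+}$ and invoke multi-homogeneity of $\equiv_{\theta}$ in both directions. The only difference is one of care rather than substance: where the paper simply writes $\mathpzc{S}(Y,\theta_Y)\cong\Quo{Y^{+}}{\equiv_{\theta}}$ and moves on, you explicitly verify via the commuting square $J\circ j_{\mathpzc{S},Y}=j_{\mathpzc{S},X}\circ\mathsf{incl}$ that $J$ acts on a class $[u]_{\equiv_{\theta_Y}}$ by sending it to $[u]_{\equiv_{\theta}}$, which is exactly what is needed to justify that identification and to transfer occurrence counts across the embedding.
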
 
\begin{proof}
Let $w\in \mathpzc{S}(X,\theta)$. If $w\in \mathpzc{S}(Y,\theta_Y)$, then for all $\omega\in X^{+}$ such that $\omega\in w$, $\omega \in Y^{+}$ (since $\mathpzc{S}(Y,\theta_Y)\cong \Quo{Y^{+}}{\equiv_{\theta}}$). Because $\equiv_{\theta}$ is a multi-homogeneous congruence, $|\omega|_x=|w|_x$ for all $\omega\in w$ and $x\in X$. Then the point 2. is obtained. Now, let $w\in \mathpzc{S}(X,\theta)$ such that for all $x\in X$, $|w|_x\not=0$ implies that $x\in Y$. Then, for all $\omega\in w$ ($\omega\in X^{+}$), $|\omega|_x=0$ for all $x\not\in Y$ which means that $\omega\in Y^{+}$, and therefore $w\in \mathpzc{S}(Y,\theta_Y)$ so that 1. is obtained.
\end{proof}

\section{Basic on rewriting systems}

\subsection{Abstract rewriting systems}\label{ars}
In this short section, as in the following, we adopt several notations and definitions from~\cite{BaaNip} that we summarize here. 
 
Let $E$ be a set, and $\Rightarrow\subseteq  E\times E$ be any binary relation, called a \emph{(one-step) reduction relation}, and $(E,\Rightarrow)$ is called an \emph{abstract rewriting system}. We denote by "$x\Rightarrow y$" the membership "$(x,y)\in\Rightarrow$", and "$x\not\Rightarrow y$" stands for "$(x,y)\not\in \Rightarrow$". Let $R^*$ be the reflexive transitive closure of a binary relation $R$.   We use $x\Leftarrow y$ or $x\xLeftarrow{*} y$ to mean that $y\Rightarrow x$ or $y\xRightarrow{*} x$. An element $x\in E$ is said to be \emph{reducible} if there exists $y\in E$ such that $x\Rightarrow y$. $x$ is \emph{irreducible} if it is not reducible, or, in other terms, if $x$ is \emph{$\Rightarrow$-minimal}: there is no $y\in E$ such that $x\Rightarrow y$.  A \emph{normal form} of $x$ is an irreducible element $y\in E$ such that $x\xRightarrow{*} y$. If it exists, the normal form of $x$ is denoted by $\mathpzc{N}(x)$. The set of all normal forms, or equivalently, of all irreducible elements is denoted by $\mathsf{Irr}(E,\Rightarrow)$ or $\mathsf{Irr}(E)$ when this causes no ambiguity. Note that two distinct normal forms $x,y$ are \emph{$\Rightarrow$-incomparable}, that is $x\not\Rightarrow y$ and $y\not\Rightarrow x$. A reduction relation $\Rightarrow$ is said to be \emph{terminating} or \emph{Noetherian} if there is no infinite $\Rightarrow$-descending chain $(x_n)_{n\in\mathbb{N}}$ of elements of $E$ such that $x_n\Rightarrow x_{n+1}$ for every $n\geq 0$. In particular, if $\Rightarrow$ is terminating, then it is irreflexive (otherwise $x_n=x$ for some $x\in E$ such that $x\Rightarrow x\in R$ would be an infinite $\Rightarrow$-descending chain), that is the reason why we freely make use terminology from order relations (such as minimal, Noetherian, descending chain, {\it{etc.}}).  We also say that the abstract rewriting system $(E,\rightarrow)$ is \emph{terminating} or \emph{Noetherian} whenever $\Rightarrow$ is so. Two elements $x,y\in E$ are said to be \emph{joignable} if there is some $z\in E$ such that  $x\xRightarrow{*} z \xLeftarrow{*} y$, and  $\Rightarrow$ (and also $(E,\Rightarrow$)) is said to be \emph{confluent} if for every $x,y_1,y_2\in E$ such that $y_1 \xLeftarrow{*} x \xRightarrow{*}y_2$, then $y_1,y_2$ are joignable. A reduction relation $\Rightarrow$, and an abstract rewriting system $(E,\Rightarrow)$, are said to be \emph{convergent} if it they are both confluent and terminating. Such reduction relations are interesting because in this case any element of $E$ has one, and only one, normal form, and if we denote by $\xLeftrightarrow{*}$ the reflexive transitive symmetric closure of $\Rightarrow$ (that is the least equivalence relation on $E$ containing $\Rightarrow$), then $x\xLeftrightarrow{*}y$ if, and only if, $\mathpzc{N}(x)=\mathpzc{N}(y)$, therefore $\mathpzc{N}\colon E \rightarrow \mathsf{Irr}(E)$ satisfies $\mathpzc{N}(\mathpzc{N}(x))=\mathpzc{N}(x)$ and so is onto and moreover, the function $\overline{\mathpzc{N}}\colon \Quo{E}{\xLeftrightarrow{*}}\rightarrow \mathsf{Irr}(E)$ which maps the class of $x$ modulo  $\xLeftrightarrow{*}$ to $\mathpzc{N}(x)$ is well-defined, onto and one-to-one.  

\subsection{Semigroup rewriting systems}

Now, let us assume that $E$ is actually a semigroup $S$. Let $R\subseteq S\times S$ be any binary relation. We define the following relation $\Rightarrow_R\subseteq S\times S$ by $x\Rightarrow_R y$ if, and only if, there are $u,v\in S^1$ and $(a,b)\in R$ such that $x=uav$ and $y=ubv$. A relation $\Rightarrow_R$ is called the \emph{(one-step) reduction rule} associated with $R$. A relation $R\subseteq S\times S$ is said to be \emph{two-sided compatible} if $(x,y)\in R$ ($x,y\in S$) implies $(uxv,uyv)\in R$. Now, the intersection of the family of all two-sided compatible relations containing a given $R\subseteq S\times S$ (this family is non void since it contains the universal relation $S\times S$) also is a two-sided compatible relation, and so we obtain the least two-sided compatible relation that contains $R$. It is called the \emph{two-sided compatible relation generated by $R$}, and it can be shown that this is precisely $\Rightarrow_R$. Now, given $R\subseteq S\times S$,   $(S,\Rightarrow_R)$ is called a \emph{(semigroup) rewriting system}; definitions and properties of an abstract rewriting system may be applied to such a rewriting system. When $S$ is the free monoid $X^*$, then this kind of rewriting systems are known as \emph{string rewriting systems} or \emph{semi-Thue systems} (see~\cite{BookOtto}). We note that the reflexive transitive symmetric closure $\xLeftrightarrow{*}_R$ of $\Rightarrow_R$ is actually a semigroup congruence, because $\Rightarrow_R$ is two-sided compatible. The quotient semigroup $\Quo{S}{\xLeftrightarrow{*}_R}$ is called the \emph{Thue semigroup} associated with the semigroup rewriting system $(S,\Rightarrow_R)$.

\section{The Tutte-Grothendieck group of a convergent alphabetic rewriting system}

\subsection{A free partially commutative structure on normal forms}\label{normalformsarefpcs}
\begin{definition}\label{alphabetic}
Let $(X,\theta)$ be a commutation alphabet, and $R\subseteq X\times \mathpzc{S}(X,\theta)$. Then $(\mathpzc{S}(X,\theta),\Rightarrow_R)$ is called an \emph{alphabetic} semigroup rewriting system. 
\end{definition}

From now on in this current subsection we assume that $(\mathpzc{S}(X,\theta),\Rightarrow_R)$ is a convergent alphabetic semigroup rewriting system. 

We study some algebraic consequences of convergence of this alphabetic rewriting system on irreducible elements in the form of some lemmas and corollaries. The main result (proposition~\ref{Freepartcomstruonirr}) of this subsection is that the set of all normal forms of a convergent alphabetic semigroup rewriting system is actually the free partially commutative semigroup in a canonical way, generated by the irreducible letters. 
\begin{lemma}\label{productofnormalforms}
Let $w,w^{\prime}\in \mathsf{Irr}(\mathpzc{S}(X,\theta),\Rightarrow_R)$. Then, $ww^{\prime}\in\mathsf{Irr}(\mathpzc{S}(X,\theta),\Rightarrow_R)$. As a result, $\mathsf{Irr}(\mathpzc{S}(X,\theta),\Rightarrow_R)$ is a sub-semigroup of $\mathpzc{S}(X,\theta)$. 
\end{lemma}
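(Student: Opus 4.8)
The plan is to characterise reducibility in $(\mathpzc{S}(X,\theta),\Rightarrow_R)$ purely in terms of letter counts, exploiting that the system is \emph{alphabetic}, i.e.\ every left-hand side of a rule in $R$ is a single letter. Put $\mathrm{dom}(R)=\{\,a\in X\colon (a,b)\in R\text{ for some }b\in\mathpzc{S}(X,\theta)\,\}$. I shall show that $w\in\mathpzc{S}(X,\theta)$ is reducible if and only if $|w|_a\neq 0$ for some $a\in\mathrm{dom}(R)$. Recall that $\equiv_{\theta}$ is multi-homogeneous (by~\cite{GHEDKrob}), so $|\cdot|_a$ is well-defined on $\mathpzc{S}(X,\theta)$ and additive: $|uv|_a=|u|_a+|v|_a$ for all $u,v$ and all $a\in X$ (with the convention $|1|_a=0$ for the adjoined unit of $\mathpzc{S}(X,\theta)^1$).

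For the forward implication, if $w=uav$ with $u,v\in\mathpzc{S}(X,\theta)^1$ and $(a,b)\in R$, then $|w|_a=|u|_a+1+|v|_a\geq 1$. Conversely, suppose $|w|_a\geq 1$ for some $(a,b)\in R$. Choose a word $\omega\in X^{+}$ with $[\omega]_{\equiv_{\theta}}=w$; since $|\omega|_a=|w|_a\geq 1$, the letter $a$ occurs in $\omega$, so $\omega=\omega_1\, a\,\omega_2$ with $\omega_1,\omega_2\in X^{*}$. Passing to classes modulo $\equiv_{\theta}$ and using Lemma~\ref{adjunctionunitfpcs} to identify $\mathpzc{M}(X,\theta)=\mathpzc{S}(X,\theta)\cup\{\,\epsilon\,\}$ with $\mathpzc{S}(X,\theta)^1$, we obtain $w=u\,a\,v$ with $u=[\omega_1]_{\equiv_{\theta}}$ and $v=[\omega_2]_{\equiv_{\theta}}$ in $\mathpzc{S}(X,\theta)^1$ (either factor possibly being the adjoined unit). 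Hence $w\Rightarrow_R u\,b\,v$, so $w$ is reducible; this proves the characterisation.

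Given this, the lemma is immediate. If $w,w'\in\mathsf{Irr}(\mathpzc{S}(X,\theta),\Rightarrow_R)$, then $|w|_a=|w'|_a=0$ for every $a\in\mathrm{dom}(R)$, hence $|ww'|_a=|w|_a+|w'|_a=0$ for every such $a$, and the characterisation (in contrapositive form) yields $ww'\in\mathsf{Irr}(\mathpzc{S}(X,\theta),\Rightarrow_R)$. Thus $\mathsf{Irr}(\mathpzc{S}(X,\theta),\Rightarrow_R)$ is closed under the multiplication of $\mathpzc{S}(X,\theta)$, that is, it is a sub-semigroup (non-empty whenever $X\neq\emptyset$, since termination guarantees a normal form for every element).

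The only step that is not bookkeeping is the converse direction of the characterisation, where one must upgrade the bare fact that ``$a$ occurs in $w$'' to a genuine factorisation $w=uav$ inside the trace semigroup: this is handled by descending to an arbitrary word representative, using multi-homogeneity to know the occurrence count is representative-independent, and using the identification $\mathpzc{S}(X,\theta)^1\cong\mathpzc{M}(X,\theta)$ to absorb an empty prefix or suffix. It is worth noting that convergence of the rewriting system plays no role in this argument.
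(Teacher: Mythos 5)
Your proof is correct and rests on exactly the same ingredients as the paper's: multi-homogeneity of $\equiv_\theta$ plus the alphabetic form of the rules, so that the occurrence of a reducible letter in $ww'$ forces its occurrence in $w$ or $w'$. The paper argues directly by contradiction from $ww'=uxv$, while you first package this into a clean stand-alone characterization of reducibility via letter counts; the organization differs but the mathematics is the same, and your closing remark that convergence is never used is accurate.
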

\begin{proof}
Let us assume that $ww^{\prime}\not\in \mathsf{Irr}(\mathpzc{S}(X,\theta),\Rightarrow_R)$. Therefore there are $x\in X$, $w^{\prime\prime}, w^{\prime\prime\prime}\in \mathpzc{S}(X,\theta)$, $u,v\in\mathpzc{M}(X,\theta)$ such that $(x,w^{\prime\prime\prime})\in R$, $ww^{\prime}=uxv$ and $w^{\prime\prime}=uw^{\prime\prime\prime}v$ (so that $ww^{\prime}\Rightarrow_R w^{\prime\prime}$). Because $\equiv_{\theta}$ is multi-homogeneous, either $w=u^{\prime}xv^{\prime}$ or $w^{\prime}=u^{\prime}xv^{\prime}$ for some $u^{\prime},v^{\prime}\in \mathpzc{M}(X,\theta)$. But in this case, either $w$ or $w^{\prime}$ is reducible, which is a contradiction. As a result, $\mathsf{Irr}(\mathpzc{S}(X,\theta),\Rightarrow_R)\subseteq \mathpzc{S}(X,\theta)$ is closed under the operation of $\mathpzc{S}(X,\theta)$ so that $\mathsf{Irr}(\mathpzc{S}(X,\theta),\Rightarrow_R)$ is  a sub-semigroup of $\mathpzc{S}(X,\theta)$.
\end{proof}
\begin{corollary}\label{normalformishomomorphism}
The map $\mathpzc{N}\colon \mathpzc{S}(X,\theta)\rightarrow \mathsf{Irr}(\mathpzc{S}(X,\theta),\Rightarrow_R)$ is a surjective homomorphism of semigroups. 
\end{corollary}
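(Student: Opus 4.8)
The plan is to establish the two required properties---surjectivity and the homomorphism property---separately, drawing on what has already been set up. Surjectivity is the easy half: by convergence of $(\mathpzc{S}(X,\theta),\Rightarrow_R)$, every element of $\mathpzc{S}(X,\theta)$ has a (unique) normal form, and every element $w$ of $\mathsf{Irr}(\mathpzc{S}(X,\theta),\Rightarrow_R)$ is its own normal form since $w\xRightarrow{*}w$ and $w$ is irreducible; hence $\mathpzc{N}(w)=w$, so $\mathpzc{N}$ restricted to $\mathsf{Irr}$ is the identity, which in particular forces $\mathpzc{N}$ to be onto $\mathsf{Irr}(\mathpzc{S}(X,\theta),\Rightarrow_R)$. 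This is exactly the statement recalled in Subsection~\ref{ars} that $\mathpzc{N}$ satisfies $\mathpzc{N}(\mathpzc{N}(x))=\mathpzc{N}(x)$ and is onto.

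For the homomorphism property I must show $\mathpzc{N}(ww^{\prime})=\mathpzc{N}(w)\mathpzc{N}(w^{\prime})$ for all $w,w^{\prime}\in\mathpzc{S}(X,\theta)$. The key inputs are: (i) two-sided compatibility of $\Rightarrow_R$, which gives that $w\xRightarrow{*}w_1$ implies $uwv\xRightarrow{*}uw_1v$ for $u,v\in\mathpzc{M}(X,\theta)$; and (ii) Lemma~\ref{productofnormalforms}, which says the product of two irreducible elements is irreducible. Given $w,w^{\prime}$, reduce $w\xRightarrow{*}\mathpzc{N}(w)$ and $w^{\prime}\xRightarrow{*}\mathpzc{N}(w^{\prime})$; by (i) applied first with $v=w^{\prime}$ and then with $u=\mathpzc{N}(w)$ we obtain $ww^{\prime}\xRightarrow{*}\mathpzc{N}(w)w^{\prime}\xRightarrow{*}\mathpzc{N}(w)\mathpzc{N}(w^{\prime})$. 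By (ii), $\mathpzc{N}(w)\mathpzc{N}(w^{\prime})$ is irreducible, hence it \emph{is} the normal form of $ww^{\prime}$ by uniqueness of normal forms under convergence; that is, $\mathpzc{N}(ww^{\prime})=\mathpzc{N}(w)\mathpzc{N}(w^{\prime})$.

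The main (and essentially only) obstacle is making sure that two-sided compatibility really does lift single reduction steps to the multiplicative context in a semigroup-with-adjoined-unit setting: one needs $u,v$ to range over $\mathpzc{M}(X,\theta)=\mathpzc{S}(X,\theta)^1$ rather than $\mathpzc{S}(X,\theta)$ so that the boundary cases (where $w$ or $w^{\prime}$ is the whole factor) are covered, but this is exactly the formulation of $\Rightarrow_R$ adopted in Subsection~\ref{ars} via $u,v\in S^1$, so no genuine difficulty arises. Everything else---the identification of $\mathsf{Irr}$ as a sub-semigroup, the existence and uniqueness of normal forms---has already been provided by Lemma~\ref{productofnormalforms} and the discussion of convergent abstract rewriting systems, so the proof is short.
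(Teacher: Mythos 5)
Your proof is correct and follows essentially the same strategy as the paper: both rely on Lemma~\ref{productofnormalforms} to certify that $\mathpzc{N}(w)\mathpzc{N}(w^{\prime})$ is irreducible and then invoke uniqueness of normal forms under convergence. The only (cosmetic) variation is that you exhibit a directed reduction $ww^{\prime}\xRightarrow{*}\mathpzc{N}(w)\mathpzc{N}(w^{\prime})$ via two-sided compatibility of $\Rightarrow_R$, whereas the paper invokes the equivalent fact that $\xLeftrightarrow{*}_R$ is a congruence to conclude $ww^{\prime}\xLeftrightarrow{*}_R\mathpzc{N}(w)\mathpzc{N}(w^{\prime})$.
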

\begin{proof}
Let $w,w^{\prime}\in \mathpzc{S}(X,\theta)$. According to lemma~\ref{productofnormalforms}, $\mathpzc{N}(w)\mathpzc{N}(w^{\prime})\in \mathsf{Irr}(\mathpzc{S}(X,\theta),\Rightarrow_R)$. Therefore, $\mathpzc{N}(\mathpzc{N}(w)\mathpzc{N}(w^{\prime}))=\mathpzc{N}(w)\mathpzc{N}(w^{\prime})$. Since $\xLeftrightarrow{*}$ is a congruence of $\mathpzc{S}(X,\theta)$, $ww^{\prime}\xLeftrightarrow{*}\mathpzc{N}(w)\mathpzc{N}(w^{\prime})$ in such a way that $\mathpzc{N}(ww^{\prime})=\mathpzc{N}(\mathpzc{N}(w)\mathpzc{N}(w^{\prime}))=\mathpzc{N}(w)\mathpzc{N}(w^{\prime})$ and $\mathpzc{N}$ is an homomorphism of semigroups. It is obviously onto.
\end{proof}
\begin{corollary}\label{isomorphism}
The semigroups $\mathsf{Irr}(\mathpzc{S}(X,\theta),\Rightarrow_R)$ and $\Quo{\mathpzc{S}(X,\theta)}{\xLeftrightarrow{*}}$ are isomorphic.
\end{corollary}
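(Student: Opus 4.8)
The plan is to invoke the first isomorphism theorem for semigroups together with the basic properties of convergent abstract rewriting systems recalled at the end of Subsection~\ref{ars}. By Corollary~\ref{normalformishomomorphism}, the normal form map $\mathpzc{N}\colon \mathpzc{S}(X,\theta)\rightarrow \mathsf{Irr}(\mathpzc{S}(X,\theta),\Rightarrow_R)$ is a surjective homomorphism of semigroups. Its kernel congruence, namely $\{\,(w,w^{\prime})\colon \mathpzc{N}(w)=\mathpzc{N}(w^{\prime})\,\}$, therefore yields an isomorphism of semigroups between the quotient of $\mathpzc{S}(X,\theta)$ by that congruence and the image of $\mathpzc{N}$, which by surjectivity is all of $\mathsf{Irr}(\mathpzc{S}(X,\theta),\Rightarrow_R)$.

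The second step is to identify this kernel congruence with $\xLeftrightarrow{*}$. Since $(\mathpzc{S}(X,\theta),\Rightarrow_R)$ is convergent by standing assumption, the properties of convergent rewriting systems recalled in Subsection~\ref{ars} give that $w\xLeftrightarrow{*}w^{\prime}$ if and only if $\mathpzc{N}(w)=\mathpzc{N}(w^{\prime})$; hence the kernel congruence of $\mathpzc{N}$ is precisely $\xLeftrightarrow{*}$. Concretely, the isomorphism is the map $\overline{\mathpzc{N}}\colon \Quo{\mathpzc{S}(X,\theta)}{\xLeftrightarrow{*}}\rightarrow \mathsf{Irr}(\mathpzc{S}(X,\theta),\Rightarrow_R)$ already introduced in Subsection~\ref{ars}, sending the class of $w$ modulo $\xLeftrightarrow{*}$ to $\mathpzc{N}(w)$: it is well-defined, onto and one-to-one by the cited properties of convergent systems, and it is a homomorphism of semigroups because $\mathpzc{N}$ is one by Corollary~\ref{normalformishomomorphism} (and $\xLeftrightarrow{*}$ is a semigroup congruence).

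I do not anticipate any genuine obstacle: essentially all of the content is carried by Corollary~\ref{normalformishomomorphism} and the general theory of convergent abstract rewriting systems. The only point requiring a small amount of care is to make sure that the congruence $\xLeftrightarrow{*}$ occurring in the statement of the corollary — i.e.\ the Thue congruence of the rewriting system $(\mathpzc{S}(X,\theta),\Rightarrow_R)$ — is exactly the relation for which the normal-form characterization $w\xLeftrightarrow{*}w^{\prime}\iff\mathpzc{N}(w)=\mathpzc{N}(w^{\prime})$ was established, and that it is indeed a semigroup congruence (which it is, being the reflexive transitive symmetric closure of the two-sided compatible relation $\Rightarrow_R$).
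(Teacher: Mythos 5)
Your proof is correct and takes essentially the same approach as the paper: both identify the isomorphism as the map $\overline{\mathpzc{N}}$ already shown in Subsection~\ref{ars} to be a well-defined bijection for convergent systems, and both invoke Corollary~\ref{normalformishomomorphism} to upgrade it to a semigroup homomorphism. Your explicit framing via the first isomorphism theorem and the kernel congruence of $\mathpzc{N}$ is a slightly more formal packaging of the same argument, not a different route.
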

\begin{proof}
As introduced in Subsection~\ref{ars}, let $\overline{\mathpzc{N}}\colon \Quo{\mathpzc{S}(X,\theta)}{\xLeftrightarrow{*}} \rightarrow \mathsf{Irr}(\mathpzc{S}(X,\theta),\Rightarrow_R)$ be the function that maps the class of $w\in \mathpzc{S}(X,\theta)$ modulo $\xLeftrightarrow{*}$ to the normal form $\mathpzc{N}(w)$. It is a one-to-one and onto set-theoretical mapping. But according to corollary~\ref{normalformishomomorphism}, $\mathpzc{N}$ is a semigroup homomorphism, in such a way that $\overline{\mathpzc{N}}$ also is.
\end{proof}

The fact that the rewriting system is alphabetic (Definition~\ref{alphabetic}) actually implies that the (isomorphic) semigroups $\mathsf{Irr}(\mathpzc{S}(X,\theta),\Rightarrow_R)$ and $\Quo{\mathpzc{S}(X,\theta)}{\xLeftrightarrow{*}}$ are actually free partially commutative. The objective is now to prove this statement. In order to do that, we exhibit the commutation alphabet that generates them. 
Let $\mathsf{Irr}(X)=\mathsf{Irr}(\mathpzc{S}(X,\theta),\Rightarrow_R)\cap X$ (recall from Subsection~\ref{fpcs} that $X$ is considered as a subset of $\mathpzc{S}(X,\theta)$). It is clear that $\mathsf{Irr}(X)=\{\, x\in X\colon \nexists w\in S(X,\theta),\ (x,w)\in R\,\}$. Indeed, for every $x\in X$, $w\in \mathpzc{S}(X,\theta)$, $x\Rightarrow_R w$ if, and only if, there are $u,v\in\mathpzc{M}(X,\theta)$, $x_1\in X$, $w_1\in \mathpzc{S}(X,\theta)$ such that $x=ux_1v$ and $w=uw_1v$. Since $\equiv_{\theta}$ is a multi-homogenous congruence (see subsection~\ref{fpcs}), $u=v$ is the empty word, and $x=x_1$, $w=w_1$. Therefore $x\Rightarrow_R w$ if, and only if, $(x,w)\in R$. 

This characterization of $\mathsf{Irr}(X)$ is used in the following lemma.
\begin{lemma}
If $X\not=\emptyset$, then $\mathsf{Irr}(X)\not=\emptyset$.
\end{lemma}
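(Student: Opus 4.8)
The plan is to start from an arbitrary letter, reduce it to its normal form, and then extract an irreducible \emph{letter} out of that normal form. Concretely, assuming $X\neq\emptyset$, pick some $x\in X$; since $X\subseteq\mathpzc{S}(X,\theta)$ this is a genuine element of $\mathpzc{S}(X,\theta)$, and since $\Rightarrow_R$ is convergent, hence in particular terminating, the normal form $\mathpzc{N}(x)$ exists and lies in $\mathsf{Irr}(\mathpzc{S}(X,\theta),\Rightarrow_R)$. It then remains to pass from this irreducible \emph{element} to an irreducible \emph{letter}.

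The key observation I would use is that every letter occurring in an irreducible element is itself an irreducible letter. First I would record the elementary fact that if $w\in\mathpzc{S}(X,\theta)$ satisfies $|w|_y\neq 0$ for some $y\in X$, then $w=uyv$ for suitable $u,v\in\mathpzc{M}(X,\theta)$: choosing a representative $\omega\in X^{+}$ of $w$ and invoking multi-homogeneity of $\equiv_{\theta}$ (so that $|\omega|_y=|w|_y\neq 0$), the letter $y$ literally appears in $\omega$, so $\omega$ factors as $\omega_1 y\omega_2$ in $X^{*}$, and the claim follows after passing to $\mathpzc{M}(X,\theta)$. This is precisely the type of argument already carried out in the proof of Lemma~\ref{productofnormalforms}.

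The argument then concludes as follows. The normal form $\mathpzc{N}(x)$ is a non-empty element of $\mathpzc{S}(X,\theta)$, so $|\mathpzc{N}(x)|_y\neq 0$ for at least one $y\in X$; write $\mathpzc{N}(x)=uyv$ with $u,v\in\mathpzc{M}(X,\theta)$ as above. If $y$ were not in $\mathsf{Irr}(X)$, then by the characterisation of $\mathsf{Irr}(X)$ recalled just before the statement there would be some $w\in\mathpzc{S}(X,\theta)$ with $(y,w)\in R$, and hence $\mathpzc{N}(x)=uyv\Rightarrow_R uwv$, contradicting the irreducibility of $\mathpzc{N}(x)$. Therefore $y\in\mathsf{Irr}(X)$, and in particular $\mathsf{Irr}(X)\neq\emptyset$.

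I do not expect a real obstacle here; the only delicate point — and the conceptual content of the lemma — is exactly the transition from an irreducible element to an irreducible letter, which rests on the \emph{alphabetic} nature of $R$ (left-hand sides being single letters) together with multi-homogeneity of $\equiv_{\theta}$, which lets one locate the offending letter inside the word. Note that confluence is not needed: termination alone suffices for the existence of $\mathpzc{N}(x)$, which is all the argument uses.
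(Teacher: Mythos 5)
Your proof is correct but takes a genuinely different route from the paper's. The paper argues by contradiction: assuming $\mathsf{Irr}(X)=\emptyset$, it starts from any letter $x\in X$, repeatedly rewrites the leftmost letter of the current word (which is always possible since no letter is $\Rightarrow_R$-minimal), and thereby builds an infinite $\Rightarrow_R$-descending chain, contradicting termination. You instead go through normal forms: use termination to get an irreducible element $\mathpzc{N}(x)$, locate a letter $y$ inside it via multi-homogeneity, and observe that $y$ must be irreducible since otherwise $\mathpzc{N}(x)$ would reduce. Both arguments rest only on termination and the alphabetic form of $R$, and both ultimately invoke some form of Noetherian/dependent-choice reasoning (yours packaged into ``terminating implies normal forms exist'', the paper's into the explicit construction of an infinite chain). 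Your version is arguably more conceptual and anticipates one direction of the subsequent Proposition~\ref{Freepartcomstruonirr} (irreducible words consist only of irreducible letters), at the cost of being slightly less self-contained at this point in the development; the paper's is more elementary and makes the reliance on termination maximally explicit. One small wording point: without confluence you should say ``a normal form of $x$'' rather than ``the normal form $\mathpzc{N}(x)$'', since uniqueness is not available — but you already flag that only termination is needed, so this is cosmetic.
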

\begin{proof}
Let us assume that $X\not=\emptyset$ and $\mathsf{Irr}(X)=\emptyset$. Let $x\in X$. Since $x\not\in\mathsf{Irr}(X)$, there is some $w\in \mathpzc{S}(X,\theta)$ such that $(x,w)\in R$. Because $w\in\mathpzc{S}(X,\theta)$, and $X$ generates $\mathpzc{S}(X,\theta)$, it can be written as $x_1u$ for some $x_1\in X$, and $u\in \mathpzc{M}(X,\theta)$. Because $x_1\not\in \mathsf{Irr}(X)$, there is $v_1\in \mathpzc{S}(X,\theta)$ such that $(x_1,v_1)\in R$. Then, $w\Rightarrow_R v_1u$. Replacing $w$ by $v_1u$, we may construct an infinite descending chain $x\Rightarrow_R w \Rightarrow_R v_1u\Rightarrow_R \cdots$, which is impossible since $\Rightarrow_R$ is assumed to be convergent, and therefore terminating. So $\mathsf{Irr}(X)\not=\emptyset$.
\end{proof}
\begin{remark}
Forthcoming proposition~\ref{Freepartcomstruonirr}, lemmas~\ref{comportementdesRinvariants1} and~\ref{comportementdesRinvariants2}, and theorem~\ref{mainstatement} are obviously valid when $X=\emptyset$.
\end{remark}
The following lemma reveals the structure of free partially commutative semigroup of $\mathsf{Irr}(\mathpzc{S}(X,\theta),\Rightarrow_R)$, and therefore also of $\Quo{\mathpzc{S}(X,\theta)}{\xLeftrightarrow{*}}$ according to lemma~\ref{isomorphism}.
\begin{proposition}\label{Freepartcomstruonirr}
The semigroup $\mathsf{Irr}(\mathpzc{S}(X,\theta),\Rightarrow_R)$ is equal to the free partially commutative semigroup $\mathpzc{S}(\mathsf{Irr}(X),\theta_{\mathsf{Irr}(X)})$ where $\theta_{\mathsf{Irr}(X)}=\theta\cap(\mathsf{Irr}(X)\times\mathsf{Irr}(X))$ (see Lemma~\ref{identification}).
\end{proposition}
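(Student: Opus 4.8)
The plan is to prove the equality as an equality of subsets of $\mathpzc{S}(X,\theta)$: both $\mathsf{Irr}(\mathpzc{S}(X,\theta),\Rightarrow_R)$ and the copy of $\mathpzc{S}(\mathsf{Irr}(X),\theta_{\mathsf{Irr}(X)})$ sitting inside $\mathpzc{S}(X,\theta)$ are sub-semigroups (by Lemma~\ref{productofnormalforms}, and by Lemma~\ref{identification} with $Y=\mathsf{Irr}(X)$ respectively), so once they are shown to have the same underlying set the statement follows. To make the right-hand side concrete I would invoke Lemma~\ref{characterization}: under the identification of Lemma~\ref{identification}, an element $w\in\mathpzc{S}(X,\theta)$ belongs to $\mathpzc{S}(\mathsf{Irr}(X),\theta_{\mathsf{Irr}(X)})$ if and only if $|w|_x\neq 0$ implies $x\in\mathsf{Irr}(X)$, i.e. every letter occurring in $w$ is an irreducible letter. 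Hence everything reduces to the claim: \emph{$w$ is $\Rightarrow_R$-irreducible if and only if all letters of $w$ lie in $\mathsf{Irr}(X)$.}

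The core step is a reducibility criterion specific to alphabetic systems, namely that $w$ is reducible precisely when some letter of $w$ is itself reducible. First I would record that if $|w|_x\neq 0$ then $w=uxv$ for some $u,v\in\mathpzc{M}(X,\theta)$: pick a representative word $\omega\in X^{+}$ of $w$; since $|\omega|_x=|w|_x\geq 1$ it factors as $\omega=\omega_1 x\omega_2$ with $\omega_1,\omega_2\in X^{*}$, and one takes $u=[\omega_1]_{\equiv_\theta}$, $v=[\omega_2]_{\equiv_\theta}$. Then: if $x$ is a letter of $w$ with $(x,w''')\in R$ for some $w'''$, we get $w\Rightarrow_R uw'''v$, so $w$ is reducible; conversely, if $w\Rightarrow_R w'$ then $w=uxv$ and $w'=uw'''v$ with $(x,w''')\in R$ and $u,v\in\mathpzc{M}(X,\theta)$, and multi-homogeneity of $\equiv_\theta$ gives $|w|_x=|u|_x+1+|v|_x\geq 1$, so $x$ is a letter of $w$; moreover, by the description of $\mathsf{Irr}(X)$ recalled just above the lemma, $(x,w''')\in R$ means $x\notin\mathsf{Irr}(X)$.

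Putting these together, $w$ is irreducible $\iff$ no letter of $w$ is reducible $\iff$ every letter of $w$ lies in $\mathsf{Irr}(X)$ $\iff$ $w\in\mathpzc{S}(\mathsf{Irr}(X),\theta_{\mathsf{Irr}(X)})$ by Lemma~\ref{characterization}; this establishes the equality of the two sub-semigroups, and hence the proposition.

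I do not anticipate a genuine obstacle here: the only mildly technical point is the step of pulling out an occurrence of $x$ from $w$, which is just the passage between an element of $\mathpzc{S}(X,\theta)$ and its representative words together with multi-homogeneity of $\equiv_\theta$. It is worth noting in passing that this argument uses only the alphabetic hypothesis $R\subseteq X\times\mathpzc{S}(X,\theta)$ and multi-homogeneity of $\equiv_\theta$; neither termination nor confluence of $\Rightarrow_R$ is needed for this particular statement (they will of course matter later).
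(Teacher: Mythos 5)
Your argument is correct and is essentially the paper's proof: you invoke Lemma~\ref{characterization}, use multi-homogeneity of $\equiv_{\theta}$ to extract an occurrence of a letter as $w=uxv$ with $u,v\in\mathpzc{M}(X,\theta)$, and observe that $x\in X$ is reducible precisely when $(x,w')\in R$ for some $w'$; you merely package the two inclusions as a single biconditional where the paper runs two proofs by contradiction. Your closing remark is also accurate: the argument needs only the alphabetic hypothesis and multi-homogeneity, not the convergence of $\Rightarrow_R$ that the paper carries as a blanket standing assumption for the subsection.
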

\begin{proof}
Let $w\in \mathsf{Irr}(\mathpzc{S}(X,\theta),\Rightarrow_R)$. Let us assume that $w\not\in \mathpzc{S}(\mathsf{Irr}(X),\theta_{\mathsf{Irr}(X)})$. According to lemma~\ref{characterization}, there exists $x\in X\setminus\mathsf{Irr}(X)$ such that for all $\omega\in X^{+}$, $\omega\in w$ ($w$ is seen as a congruence class), $|\omega|_x\not=0$. Therefore $\omega = uxv$ for some $u,v\in X^{*}$ and $w=\pi_{\theta}(u)x\pi_{\theta}(v)$ (where $\pi_{\theta}\colon X^{*}\rightarrow \mathpzc{M}(X,\theta)$ is the canonical epimorphism and where we recall that $X$ is seen as a subset of $\mathpzc{S}(X,\theta)$, $X^{*}=X^{+}\sqcup\{\,\epsilon\,\}$, and $\mathpzc{M}(X,\theta)=\mathpzc{S}(X,\theta)\sqcup\{\,\epsilon\,\}$). But $x\not\in\mathsf{Irr}(X)$, then there exists $w^{\prime}\in \mathpzc{S}(X,\theta)$ such that $(x,w^{\prime})\in R$, and therefore $w\Rightarrow_R \pi_{\theta}(u)w^{\prime}\pi_{\theta}(v)$ which contradicts the fact that $w\in \mathsf{Irr}(\mathpzc{S}(X,\theta),\Rightarrow_R)$. Let $w\in \mathpzc{S}(X,\theta)$ such that $w\in\mathpzc{S}(\mathsf{Irr}(X),\theta_{\mathsf{Irr}(X)})$. Let us assume that $w\not\in \mathsf{Irr}(\mathpzc{S}(X,\theta),\Rightarrow_R)$. Therefore $w=uxv$ for some $u,v\in\mathpzc{M}(X,\theta)$, $x\in X$ such that there is $w^{\prime}\in \mathpzc{S}(X,\theta)$ with $(x,w^{\prime})\in R$. Therefore $x\not\in\mathsf{Irr}(X)$. It is then clear that for every $\omega\in X^{+}$ such that $\omega\in w$, $|\omega|_x>0$. But according to lemma~\ref{characterization}, this is impossible because $x\not\in\mathsf{Irr}(X)$ and $w\in \mathpzc{S}(\mathsf{Irr}(X),\theta_{\mathsf{Irr}(X)})$. We have proved that $\mathsf{Irr}(\mathpzc{S}(X,\theta),\Rightarrow_R)$ and $\mathpzc{S}(\mathsf{Irr}(X),\theta_{\mathsf{Irr}(X)})$ are equal as sets. But since they are both sub-semigroups of $\mathpzc{S}(X,\theta)$, then they are equal as semigroups.
\end{proof}

\subsection{The Tutte-Grothendieck group of a convergent alphabetic rewriting system}\label{TGgroup}

\begin{definition}\label{R-invariant}
Let $(X,\theta)$ be a commutation alphabet, and let $\Rightarrow_R$ be an alphabetic rewriting system.  Let $S$ be any semigroup, and let $f\colon X\rightarrow S$ that respects the commutations. Let $f^{\mathpzc{S}}\colon \mathpzc{S}(X,\theta)\rightarrow S$ be the unique homomorphism of semigroups such that the following diagram commutes (see Subsection~\ref{fpcs}). 
\begin{equation}
\xymatrixcolsep{1pc}
\xymatrix{
X\ar[r]^{f} \ar[d]_{j_{\mathpzc{S},X}} & S\\
\mathpzc{S}(X,\theta) \ar[ur]_{f^{\mathpzc{S}}} &
}
\end{equation}
Then $f$ is said to be an \emph{$R$-invariant} if for every $x\in X$ and $w\in \mathpzc{S}(X,\theta)$ such that $(x,w)\in R$, then $f(x)=f^{\mathpzc{S}}(w)$. 
\end{definition}
Informally speaking, according to definition~\ref{R-invariant}, a function $f$ that respects the commutations is an $R$-invariant if its canonical semigroup extension $f^{\mathpzc{S}}$ is constant  for all reductions $(x,w)\in R$.

Let us assume that $(X,\theta)$ is a commutation alphabet, and let $\Rightarrow_S$ be an alphabetic rewriting system on $\mathpzc{S}(X,\theta)$ (not necessarily convergent). The fact that the rewriting system is alphabetic implies in an essential way the following results. 
\begin{lemma}\label{comportementdesRinvariants1}
Let $S$ be a semigroup, and let $f\colon X\rightarrow S$ be a  function that respects the commutations.  Let  $f^{\mathpzc{S}}$ be its canonical semigroup extension from $\mathpzc{S}(X,\theta)$ to $S$. If $f$ is a $R$-invariant, then for every $w,w^{\prime}\in \mathpzc{S}(X,\theta)$ such that $w\Rightarrow_R w^{\prime}$, we have $f^{\mathpzc{S}}(w)=f^{\mathpzc{S}}(w^{\prime})$. 
\end{lemma}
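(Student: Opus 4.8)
The plan is to unfold the definition of the one-step reduction $\Rightarrow_R$ and then transport the resulting identity across the homomorphism $f^{\mathpzc{S}}$. Since $w\Rightarrow_R w'$ and the system is alphabetic, by the definition of $\Rightarrow_R$ there are $u,v\in\mathpzc{M}(X,\theta)$, a letter $x\in X$ and a word $\omega\in\mathpzc{S}(X,\theta)$ with $(x,\omega)\in R$ such that $w=uxv$ and $w'=u\omega v$ in $\mathpzc{S}(X,\theta)$. The goal then reduces to showing $f^{\mathpzc{S}}(uxv)=f^{\mathpzc{S}}(u\omega v)$, and the only ingredients needed are that $f^{\mathpzc{S}}$ is a semigroup homomorphism restricting to $f$ on $X$, together with the $R$-invariance of $f$.

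First I would dispose of the bookkeeping caused by the fact that $u$ or $v$ may be the empty word $\epsilon$ (so that $w=uxv$ is a priori a product in $\mathpzc{M}(X,\theta)$ rather than in $\mathpzc{S}(X,\theta)$). The cleanest way is to pass to $S^1$: by Lemma~\ref{adjunctionunitfpcs} one has $\mathpzc{M}(X,\theta)\cong\mathpzc{S}(X,\theta)^1$, so by the universal property of the free adjunction of a unit the semigroup homomorphism $f^{\mathpzc{S}}\colon\mathpzc{S}(X,\theta)\to S\hookrightarrow S^1$ extends uniquely to a monoid homomorphism $\overline{f}\colon\mathpzc{M}(X,\theta)\to S^1$ with $\overline{f}(\epsilon)=1$ and $\overline{f}(z)=f^{\mathpzc{S}}(z)$ for every $z\in\mathpzc{S}(X,\theta)$. (One may alternatively just split into the four cases according to which of $u,v$ equals $\epsilon$; the argument is verbatim the same.)

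The main step is then a one-line computation. Using multiplicativity of $\overline{f}$, $\overline{f}(w)=\overline{f}(u)\,\overline{f}(x)\,\overline{f}(v)$; since $x\in X\subseteq\mathpzc{S}(X,\theta)$ we have $\overline{f}(x)=f^{\mathpzc{S}}(x)=f(x)$, and since $f$ is an $R$-invariant and $(x,\omega)\in R$ we get $f(x)=f^{\mathpzc{S}}(\omega)=\overline{f}(\omega)$. Substituting back, $\overline{f}(w)=\overline{f}(u)\,\overline{f}(\omega)\,\overline{f}(v)=\overline{f}(u\omega v)=\overline{f}(w')$. As $w,w'\in\mathpzc{S}(X,\theta)$, this reads $f^{\mathpzc{S}}(w)=f^{\mathpzc{S}}(w')$, which is the claim. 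This is precisely where the \emph{alphabetic} hypothesis enters: the left-hand side of every rule in $R$ is a single letter $x$, for which $f^{\mathpzc{S}}(x)$ coincides with $f(x)$ so that the $R$-invariance condition of Definition~\ref{R-invariant} applies directly.

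There is essentially no genuine obstacle here; the only point deserving a little care is the unit bookkeeping above (whether one prefers to work in $S^1$ via Lemma~\ref{adjunctionunitfpcs} or to run a short case analysis on $u,v$), and remembering to apply the homomorphism property in the ambient monoid $\mathpzc{M}(X,\theta)$ rather than in $\mathpzc{S}(X,\theta)$ itself.
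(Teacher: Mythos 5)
Your proof is correct and follows essentially the same route as the paper's: both pass to the monoid extension $\mathpzc{M}(X,\theta)\to S^1$ of $f^{\mathpzc{S}}$ (the paper calls it $f^{\mathpzc{S}}_1$, you call it $\overline{f}$) to handle the possibility that $u$ or $v$ is empty, and then run the same three-factor computation $\overline{f}(u)\,\overline{f}(x)\,\overline{f}(v)=\overline{f}(u)\,f^{\mathpzc{S}}(w'')\,\overline{f}(v)$ using the $R$-invariance of $f$ on the single-letter left-hand side. The only difference is cosmetic notation; nothing substantive separates the two arguments.
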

\begin{proof}
Since we will deal with the empty word, one needs to recall the following. 
According to lemma~\ref{adjunctionunitfpcs}, $\mathpzc{M}(X,\theta)=\mathpzc{S}(X,\theta)\cup\{\, \epsilon\,\}$, where $\epsilon$ is the empty word.  Let us define $f^{\mathpzc{S}}_1\colon \mathpzc{M}(X,\theta)\rightarrow S^1$ the canonical extension of $f^{\mathpzc{S}}$ as a monoid homomorphism. That is, whenever $w\in \mathpzc{S}(X,\theta)$, $f^{\mathpzc{S}}_1(w)=f^{\mathpzc{S}}(w)$, and $f^{\mathpzc{S}}_1(\epsilon)=1$. 
Let $w,w^{\prime}\in \mathpzc{S}(X,\theta)$ such that $w\Rightarrow_R w^{\prime}$. Then there exist $x\in X$, $w^{\prime\prime}\in \mathpzc{S}(X,\theta)$, $u,v\in \mathpzc{M}(X,\theta)$ such that $(x,w^{\prime\prime})\in R$, $w=uxv$ and $w^{\prime}=uw^{\prime\prime}v$.  Because $f$ is $R$-invariant, $f(x)=f^{\mathpzc{S}}(w^{\prime})$, and then we have $f^{\mathpzc{S}}(w)=f^{\mathpzc{S}}(uxv)=f^{\mathpzc{S}}_1(uxv)=f^{\mathpzc{S}}_1(u)f^{\mathpzc{S}}_1(x)f^{\mathpzc{S}}_1(v)=f^{\mathpzc{S}}_1(u)f^{\mathpzc{S}}(x)f^{\mathpzc{S}}_1(v)=f^{\mathpzc{S}}_1(u)f(x)f^{\mathpzc{S}}_1(v)=f^{\mathpzc{S}}_1(u)f{\mathpzc{S}}(w^{\prime\prime})f^{\mathpzc{S}}_1(v)=f^{\mathpzc{S}}_1(u)f{\mathpzc{S}}_1(w^{\prime\prime})f^{\mathpzc{S}}_1(v)=f^{\mathpzc{S}}_1(uw^{\prime\prime}v)=f^{\mathpzc{S}}_1(w^{\prime})=f^{\mathpzc{S}}(w^{\prime})$. 
\end{proof}
\begin{corollary}\label{comportementdesRinvariants2}
Let $S$ be a semigroup, and let $f\colon X\rightarrow S$ be a  function that respects the commutations. If $f$ is a $R$-invariant, then its canonical semigroup extension $f^{\mathpzc{S}}$ passes to the quotient $\Quo{\mathpzc{S}(X,\theta)}{\xLeftrightarrow{*}_R}$.  
\end{corollary}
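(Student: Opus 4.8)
The plan is to show that the semigroup congruence $\xLeftrightarrow{*}_R$ is contained in the kernel congruence of $f^{\mathpzc{S}}$, so that $f^{\mathpzc{S}}$ factors uniquely through the canonical projection $\mathpzc{S}(X,\theta)\rightarrow \Quo{\mathpzc{S}(X,\theta)}{\xLeftrightarrow{*}_R}$. Concretely, I would first recall that, $f^{\mathpzc{S}}$ being a semigroup homomorphism, its kernel $\ker f^{\mathpzc{S}}=\{(w,w')\colon f^{\mathpzc{S}}(w)=f^{\mathpzc{S}}(w')\}$ is a semigroup congruence on $\mathpzc{S}(X,\theta)$. By Lemma~\ref{comportementdesRinvariants1}, whenever $w\Rightarrow_R w'$ we have $f^{\mathpzc{S}}(w)=f^{\mathpzc{S}}(w')$, so the one-step reduction relation $\Rightarrow_R$ is contained in $\ker f^{\mathpzc{S}}$. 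Since $\xLeftrightarrow{*}_R$ is by definition the least equivalence relation containing $\Rightarrow_R$ (and is in fact a congruence, as noted after the definition of a semigroup rewriting system), and since $\ker f^{\mathpzc{S}}$ is an equivalence relation containing $\Rightarrow_R$, we get $\xLeftrightarrow{*}_R\ \subseteq\ \ker f^{\mathpzc{S}}$.

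From this inclusion the universal property of the quotient semigroup gives exactly what is claimed: there is a unique semigroup homomorphism $\overline{f^{\mathpzc{S}}}\colon \Quo{\mathpzc{S}(X,\theta)}{\xLeftrightarrow{*}_R}\rightarrow S$ such that $\overline{f^{\mathpzc{S}}}\circ p = f^{\mathpzc{S}}$, where $p\colon \mathpzc{S}(X,\theta)\rightarrow \Quo{\mathpzc{S}(X,\theta)}{\xLeftrightarrow{*}_R}$ is the canonical epimorphism. This is the precise meaning of ``$f^{\mathpzc{S}}$ passes to the quotient''. Explicitly, $\overline{f^{\mathpzc{S}}}$ sends the class of $w$ modulo $\xLeftrightarrow{*}_R$ to $f^{\mathpzc{S}}(w)$, and the inclusion above is exactly the statement that this assignment does not depend on the chosen representative.

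There is really no serious obstacle here: the only point that requires (a small amount of) care is to be explicit that one is using Lemma~\ref{comportementdesRinvariants1} to control a single reduction step, and then invoking minimality of $\xLeftrightarrow{*}_R$ among congruences (equivalently, among equivalence relations) containing $\Rightarrow_R$ to propagate this to the full symmetric reflexive transitive closure, rather than trying to reprove the passage step by step. Since the lemma already handles the interaction with the empty word and the two-sided contexts $u,v\in\mathpzc{M}(X,\theta)$, nothing further about the alphabetic nature of the system is needed in the corollary itself; it is entirely a formal consequence. I would keep the proof to two or three lines along these lines.
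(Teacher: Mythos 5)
Your proof is correct and follows essentially the same route as the paper's: both hinge on Lemma~\ref{comportementdesRinvariants1} to control a single reduction step, and both conclude by the universal property of the quotient semigroup. The only (purely presentational) difference is that you propagate from $\Rightarrow_R$ to $\xLeftrightarrow{*}_R$ by invoking minimality of the latter among equivalence relations containing $\Rightarrow_R$ together with the fact that $\ker f^{\mathpzc{S}}$ is such a relation, whereas the paper unfolds an explicit finite chain $w=w_0,\dots,w_n=w'$ and applies the lemma at each step; the two are interchangeable.
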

\begin{proof}
Let $w,w^{\prime}\in \mathpzc{S}(X,\theta)$ such that $w\xLeftrightarrow{*}_R w^{\prime}$. Then there are $n>0$, $w_0,\cdots, w_n\in\mathpzc{S}(X,\theta)$, $w_0=w$, $w_n=w^{\prime}$ such that for every $0\leq i <n$, $w_i=w_{i+1}$ or $w_i\Leftrightarrow_R w_{i+1}$. Therefore for every $0\leq i<n$, either $w_i=w_{i+1}$, or $w_i \Rightarrow_R w_{i+1}$, or $w_i \Leftarrow_R w_{i+1}$. Because $f$ is a $R$-invariant, according to lemma~\ref{comportementdesRinvariants1}, for every $0\leq i<n$, $f^{\mathpzc{S}}(w_i)=f^{\mathpzc{S}}(w_{i+1})$. Therefore $f^{\mathpzc{S}}(w)=f^{\mathpzc{S}}(w_0)=\cdots=f^{\mathpzc{S}}(w_n)=f^{\mathpzc{S}}(w^{\prime})$. Then, there exists a unique semigroup homomorphism $f^{\mathpzc{S}}_{R}\colon \Quo{\mathpzc{S}(X,\theta)}{\xLeftrightarrow{*}_R}\rightarrow S$ such that $f^{\mathpzc{S}}_{R}([w]_{\xLeftrightarrow{*}_R})=f^{\mathpzc{S}}(w)$ for every $w\in \mathpzc{S}(X,\theta)$ (where $[w]_{\xLeftrightarrow{*}_R}$ denotes the class of $w$ modulo $\xLeftrightarrow{*}_R$).
\end{proof}

We are now in position to establish the main result of this paper. 
\begin{theorem}\label{mainstatement}
Let $(X,\theta)$ be a commutation alphabet, and let $(\mathpzc{S}(X,\theta),\Rightarrow_R)$ be an alphabetic rewriting system. There exist a group $\mathpzc{TG}(X,\theta,R)$ and a mapping $t\colon X\rightarrow \mathpzc{TG}(X,\theta,R)$ that respects the commutations which is an $R$-invariant such that for every group $G$, and every (commutations respecting) $R$-invariant mapping $f\colon X\rightarrow G$, there is a unique group homomorphism $h\colon \mathpzc{TG}(X,\theta,R)\rightarrow G$ such that the following diagram commutes.
 \begin{equation}
\xymatrixcolsep{1pc}
\xymatrix{
X\ar[r]^{f} \ar[d]_{t} &G\\
\mathpzc{TG}(X,\theta,R) \ar[ur]_{h} &
}
\end{equation}
Moreover, if $\Rightarrow_R$ is convergent, then the group $\mathpzc{TG}(X,\theta,R)$ is isomorphic to the free partially commutative group $\mathpzc{G}(\mathsf{Irr}(X),\theta_{\mathsf{Irr}(X)})$ and $t$ is the normal form $\mathpzc{N}\circ j_{\mathpzc{S},X}\colon X\rightarrow \mathpzc{S}(\mathpzc{Irr}(X),\theta_{\mathpzc{Irr}(X)})$ restricted to the alphabet $X$ (recall that we have $X\subseteq \mathpzc{S}(\mathpzc{Irr}(X),\theta_{\mathpzc{Irr}(X)})\subseteq \mathpzc{M}(\mathpzc{Irr}(X),\theta_{\mathpzc{Irr}(X)})\subseteq \mathpzc{G}(\mathpzc{Irr}(X),\theta_{\mathpzc{Irr}(X)})$ under natural identifications; see Subsection~\ref{fpcs}).
\end{theorem}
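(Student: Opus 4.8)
The plan is to build $\mathpzc{TG}(X,\theta,R)$ from the Thue semigroup of the system by freely adjoining a unit and then passing to the Grothendieck group, and to take for $t$ the evident composite of the canonical maps involved. Concretely, set $S:=\Quo{\mathpzc{S}(X,\theta)}{\xLeftrightarrow{*}_R}$ (the Thue semigroup), $M:=S^{1}$ (the Thue monoid), and $\mathpzc{TG}(X,\theta,R):=\mathscr{G}(M)$; then let $q\colon \mathpzc{S}(X,\theta)\rightarrow S$ be the canonical epimorphism, $i_{\mathpzc{S},S}\colon S\rightarrow M$ the unit of the free adjunction of an identity, $i_{\mathpzc{M},M}\colon M\rightarrow \mathscr{G}(M)$ the unit of group completion, and define $t:=i_{\mathpzc{M},M}\circ i_{\mathpzc{S},S}\circ q\circ j_{\mathpzc{S},X}\colon X\rightarrow \mathpzc{TG}(X,\theta,R)$. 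Since $j_{\mathpzc{S},X}$ respects the commutations and the remaining three maps are homomorphisms, $t$ respects the commutations; moreover the composite semigroup homomorphism $q':=i_{\mathpzc{M},M}\circ i_{\mathpzc{S},S}\circ q$ coincides with the canonical semigroup extension $t^{\mathpzc{S}}$ of $t$, by uniqueness in the universal property of $\mathpzc{S}(X,\theta)$. To see that $t$ is an $R$-invariant, recall (from the discussion preceding Proposition~\ref{Freepartcomstruonirr}) that, the system being alphabetic, $x\Rightarrow_R w$ is equivalent to $(x,w)\in R$; hence whenever $(x,w)\in R$ we have $x\xLeftrightarrow{*}_R w$, so $q(x)=q(w)$ and therefore $t(x)=q'(x)=q'(w)=t^{\mathpzc{S}}(w)$, as Definition~\ref{R-invariant} demands.

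Next I would verify the universal property. Let $G$ be a group and $f\colon X\rightarrow G$ a commutations-respecting $R$-invariant, with canonical semigroup extension $f^{\mathpzc{S}}\colon \mathpzc{S}(X,\theta)\rightarrow G$. By Corollary~\ref{comportementdesRinvariants2}, $f^{\mathpzc{S}}$ factors as $f^{\mathpzc{S}}=f^{\mathpzc{S}}_R\circ q$ for a unique semigroup homomorphism $f^{\mathpzc{S}}_R\colon S\rightarrow G$. Viewing $G$ as a monoid, the universal property of the free adjunction of a unit (Subsection~\ref{basics}) extends $f^{\mathpzc{S}}_R$ uniquely to a monoid homomorphism $g\colon M\rightarrow G$, and the universal property of group completion extends $g$ uniquely to a group homomorphism $h:=\widehat{g}\colon \mathpzc{TG}(X,\theta,R)\rightarrow G$. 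Chasing the defining identities $\widehat{g}\circ i_{\mathpzc{M},M}=g$, $g\circ i_{\mathpzc{S},S}=f^{\mathpzc{S}}_R$, $f^{\mathpzc{S}}_R\circ q=f^{\mathpzc{S}}$ and $f^{\mathpzc{S}}\circ j_{\mathpzc{S},X}=f$ gives $h\circ t=f$. For uniqueness, note that $\mathscr{G}(M)$ is generated as a group by the image of $M$, that $M=S^{1}$ is generated as a monoid by the image of $S$, and that $S$ is generated as a semigroup by $q(X)$ (because $X$ generates $\mathpzc{S}(X,\theta)$); hence $t(X)$ generates $\mathpzc{TG}(X,\theta,R)$, and any group homomorphism on it is determined by its values on $t(X)$.

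For the convergent case I would assemble the isomorphisms already available. Corollary~\ref{isomorphism} gives $S\cong \mathsf{Irr}(\mathpzc{S}(X,\theta),\Rightarrow_R)$ via $\overline{\mathpzc{N}}$, and Proposition~\ref{Freepartcomstruonirr} identifies the latter with $\mathpzc{S}(\mathsf{Irr}(X),\theta_{\mathsf{Irr}(X)})$; applying the isomorphism-preserving functors $S\mapsto S^{1}$ and $M\mapsto\mathscr{G}(M)$ and invoking Lemma~\ref{adjunctionunitfpcs} and Lemma~\ref{Grothendieckoffpcg} yields $\mathpzc{TG}(X,\theta,R)=\mathscr{G}(S^{1})\cong \mathscr{G}\bigl(\mathpzc{S}(\mathsf{Irr}(X),\theta_{\mathsf{Irr}(X)})^{1}\bigr)\cong \mathscr{G}(\mathpzc{M}(\mathsf{Irr}(X),\theta_{\mathsf{Irr}(X)}))\cong \mathpzc{G}(\mathsf{Irr}(X),\theta_{\mathsf{Irr}(X)})$. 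To recognise $t$ under this isomorphism, use that $\overline{\mathpzc{N}}\circ q=\mathpzc{N}$ by the definition of $\overline{\mathpzc{N}}$, so that the composite $X\xrightarrow{j_{\mathpzc{S},X}}\mathpzc{S}(X,\theta)\xrightarrow{q}S\cong \mathpzc{S}(\mathsf{Irr}(X),\theta_{\mathsf{Irr}(X)})$ equals $\mathpzc{N}\circ j_{\mathpzc{S},X}$, while the subsequent maps into $\mathpzc{M}(\mathsf{Irr}(X),\theta_{\mathsf{Irr}(X)})$ and $\mathpzc{G}(\mathsf{Irr}(X),\theta_{\mathsf{Irr}(X)})$ become, under the identifications of Subsection~\ref{fpcs}, the natural inclusions (the last one injective by the remark following Lemma~\ref{Grothendieckoffpcg}). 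Thus $t$ is $\mathpzc{N}\circ j_{\mathpzc{S},X}$ with codomain enlarged to $\mathpzc{G}(\mathsf{Irr}(X),\theta_{\mathsf{Irr}(X)})$, which is the stated description.

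I expect no single deep step: the argument is a diagram chase through the universal constructions recalled earlier, glued to Corollaries~\ref{comportementdesRinvariants2} and~\ref{isomorphism} and Proposition~\ref{Freepartcomstruonirr}. The two points demanding care are that $t$ is itself an $R$-invariant --- this is exactly where alphabeticity is used, via $x\Rightarrow_R w\iff (x,w)\in R$ --- and the faithful tracking of the image of $X$ through the successive universal maps, and through the isomorphism chain in the convergent case, in order to identify $t$ with the normal-form function; as in the proof of Lemma~\ref{comportementdesRinvariants1}, the bookkeeping around the adjoined unit (the empty word) must be carried out explicitly.
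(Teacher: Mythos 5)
Your proof is correct and follows essentially the same route as the paper: build $\mathpzc{TG}(X,\theta,R)$ as $\mathscr{G}\bigl((\Quo{\mathpzc{S}(X,\theta)}{\xLeftrightarrow{*}_R})^{1}\bigr)$, set $t$ to be the composite of the four canonical maps, verify $R$-invariance via $[x]_{\xLeftrightarrow{*}_R}=[w]_{\xLeftrightarrow{*}_R}$, obtain $h$ by chaining the universal properties through Corollary~\ref{comportementdesRinvariants2}, and, in the convergent case, identify the construction through Corollary~\ref{isomorphism}, Proposition~\ref{Freepartcomstruonirr}, and Lemmas~\ref{adjunctionunitfpcs} and~\ref{Grothendieckoffpcg}. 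You are in fact slightly more explicit than the paper on two small points --- spelling out the uniqueness of $h$ by observing that $t(X)$ generates $\mathpzc{TG}(X,\theta,R)$, and tracing the identification of $t$ with $\mathpzc{N}\circ j_{\mathpzc{S},X}$ via $\overline{\mathpzc{N}}\circ q=\mathpzc{N}$ --- both of which the paper leaves implicit or dismisses as ``obvious to check.''
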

\begin{proof}
Let $G$ be a group  and let $f\colon X \rightarrow G$ be a commutations respecting $R$-invariant mapping. According to the universal problem of free partially commutative semigroups, because $G$ is also a semigroup, we have the following commutative diagram.
 \begin{equation}
\xymatrixcolsep{1pc}
\xymatrix{
X\ar[r]^{f} \ar[d]_{j_{\mathpzc{S},X}} &G\\
\mathpzc{S}(X,\theta) \ar[ur]_{f^{\mathpzc{S}}} &
}
\end{equation}
According to corollary~\ref{comportementdesRinvariants2}, we may complete the previous diagram in a natural way (the notations from the proof of corollary~\ref{comportementdesRinvariants2} are used).
\begin{equation}
\xymatrixcolsep{2pc}
\xymatrix{
X\ar[r]^{f} \ar[d]_{j_{\mathpzc{S},X}} &G\\
\mathpzc{S}(X,\theta) \ar[d]_{[\cdot]_{\xLeftrightarrow{*}_R}}\ar[ur]_{f^{\mathpzc{S}}} &\\
\Quo{\mathpzc{S}(X,\theta)}{\xLeftrightarrow{*}_R} \ar[ruu]_{f^{\mathpzc{S}}_R}&
}
\end{equation}
Now, we extend in a natural way $f^{\mathpzc{S}}_R$ as a monoid homomorphism $f^{\mathpzc{S}}_{R,1}\colon (\Quo{\mathpzc{S}(X,\theta)}{\xLeftrightarrow{*}_R})^1\rightarrow G$ (because $G$ is also a monoid). Let us denote by $M$ the monoid $(\Quo{\mathpzc{S}(X,\theta)}{\xLeftrightarrow{*}_R})^1$. We obtain the following diagram.
\begin{equation}
\xymatrixcolsep{8.3pc}
\xymatrix{
X\ar[r]^{f} \ar[d]_{[\cdot]_{\xLeftrightarrow{*}_R}\circ j_{\mathpzc{S},X}} &G\\
\Quo{\mathpzc{S}(X,\theta)}{\xLeftrightarrow{*}_R} \ar[d]_{i_{\mathpzc{S},\Quo{\mathpzc{S}(X,\theta)}{\xLeftrightarrow{*}_R}}}\ar[ru]_{f^{\mathpzc{S}}_R}&\\
M \ar[ruu]_{f^{\mathpzc{S}}_{R,1}}&
}
\end{equation}
Finally, using the Grothendieck group $\mathscr{G}(M)$ of $M$, we complete the previous commutative diagram as follows (where $i=i_{\mathpzc{S},\Quo{\mathpzc{S}(X,\theta)}{\xLeftrightarrow{*}_R}}\circ [\cdot]_{\xLeftrightarrow{*}_R}\circ j_{\mathpzc{S},X}$).
\begin{equation}\label{bigdiagram}
\xymatrixcolsep{5pc}
\xymatrix{
X\ar[r]^{f} \ar@/_3.5pc/[dd]_{t} \ar[d]_{i} &G\\
M \ar[d]_{i_{\mathpzc{M},M}} \ar[ru]_{f^{\mathpzc{S}}_{R,1}}&\\
\mathpzc{TG}(X,\theta,R)=\mathscr{G}(M) \ar@/_1pc/[ruu]_{\widehat{f}^{\mathpzc{S}}_{R,1}=h}& 
}
\end{equation}
Now, as illustrated in the previous diagram, let $\mathpzc{TG}(X,\theta,R)=\mathscr{G}(M)$, $t=i_{\mathpzc{M},M}\circ i_{\mathpzc{S},\Quo{\mathpzc{S}(X,\theta)}{\xLeftrightarrow{*}_R}}\circ [\cdot]_{\xLeftrightarrow{*}_R}\circ j_{\mathpzc{S},X}$, and $h=\widehat{f}^{\mathpzc{S}}_{R,1}$. First of all, $t$ obviously respects the commutations. Let us consider the canonical extension $t^{\mathpzc{S}}\colon X\rightarrow \mathpzc{S}(X,\theta)$ of $t$. So we have the following commutative diagram.
\begin{equation}
\xymatrixcolsep{7pc}
\xymatrix{
X\ar[r]^{t} \ar[d]_{j_{\mathpzc{S},X}} &\mathpzc{TG}(X,\theta,R)\\
\mathpzc{S}(X,\theta) \ar[ur]_{t^{\mathpzc{S}}} &
}
\end{equation}
By uniqueness of the solution of a universal problem, and according to the diagram~\ref{bigdiagram}, we have $t^{\mathpzc{S}}=i_{\mathpzc{M},M}\circ i_{\mathpzc{S},\Quo{\mathpzc{S}(X,\theta)}{\xLeftrightarrow{*}_R}}\circ [\cdot]_{\xLeftrightarrow{*}_R}$. Now, let $x\in X$, $w\in \mathpzc{S}(X,\theta)$ such that $(x,w)\in R$. Then, $[x]_{\xLeftrightarrow{*}_R}=[w]_{\xLeftrightarrow{*}_R}$. Therefore, $t(x)=t^{\mathpzc{S}}(x)=t^{\mathpzc{S}}(w)$, so that $t$ is $R$-invariant.
 Then the first part of the theorem is proved. 

Now, let us assume that $(\mathpzc{S}(X,\theta),\Rightarrow_R)$ is convergent. Then, by proposition~\ref{Freepartcomstruonirr}, $\Quo{\mathpzc{S}(X,\theta)}{\xLeftrightarrow{*}_R}$ is isomorphic to the free partially commutative semigroup $\mathpzc{S}(\mathsf{Irr}(X),\theta_{\mathsf{Irr}(X)})$. Therefore, $M=(\Quo{\mathpzc{S}(X,\theta)}{\xLeftrightarrow{*}_R})^1$ is isomorphic to the free partially commutative monoid $\mathpzc{M}(X,\theta)$ (by lemma~\ref{adjunctionunitfpcs}). Finally, the Grothendieck group $\mathscr{G}(M)$ is isomorphic to the Grothendieck group $\mathscr{G}(\mathpzc{M}(X,\theta))$ (because $\mathscr{G}(\cdot)$ is functorial) so that it is isomorphic to the free partially commutative group $\mathpzc{G}(X,\theta)$ (by lemma~\ref{Grothendieckoffpcg}). The fact that in this case, $t$ is the normal form $\mathpzc{N}\circ j_{\mathpzc{S},X}\colon X\rightarrow \mathpzc{S}(\mathpzc{Irr}(X),\theta_{\mathpzc{Irr}(X)})$ restricted to the alphabet $X$ (where $\mathpzc{S}(\mathpzc{Irr}(X),\theta_{\mathpzc{Irr}(X)})$ is naturally identified with a sub-semigroup of $\mathpzc{G}(\mathpzc{Irr}(X),\theta_{\mathpzc{Irr}(X)})$) is quite obvious to check.
\end{proof}
\begin{definition}
The group $\mathpzc{TG}(X,\theta,R)$ is called the \emph{Tutte-Grothendieck group} and $t$ the \emph{universal Tutte-Grothendieck $R$-invariant} of the alphabetic rewriting system $(\mathpzc{S}(X,\theta),\Rightarrow_R)$. 
\end{definition}

\subsection{Some examples}

This section is devoted to the presentation of several examples of Tutte-Grothendieck groups and universal invariants corresponding to convergent alphabetic rewriting systems. These examples come from the theory of graphs (Tutte polynomial), from algebra (Weyl algebra, and Poincar\'e-Birkhoff-Witt theorem) and from combinatorics (prefabs). 

\subsubsection{The Tutte polynomial}

In its famous paper~\cite{Tutte}, Tutte used the following decomposition of (isomorphism classes of) finite multigraphs (graphs with multiple edges and loops). Let $G$ be a multigraph, and $e$ be a link (edge which is not a loop nor a bridge) in $G$. Let $G-e$ be the graph obtained from $G$ by erasing $e$, and let $G/e$ be the graph obtained by contraction of $e$ in $G$ ($e$ is removed, and its origin and source are identified). Then $G$ is decomposed into $(G-e)+G/e$  ($+$ being the free commutative juxtaposition). As explained in~\cite{Brylawski} in terms of an order relation, a rewriting system may be defined, and the universal invariant attached to this system is the well-known Tutte polynomials (see~\cite{Tutte}).

\subsubsection{Integral Weyl algebra}
For any set $X$, let $X^{\oplus}$ be the free commutative semigroup generated by $X$ (that is, $X^{\oplus}=\mathpzc{S}(X,\theta,)$, where $\theta=(X\times X)\setminus \Delta_X$ and $\Delta_X$ is the equality relation on $X$), written additively. Recall also that the free Abelian group generated by $X$, namely $\mathpzc{G}(X,\theta)$, is isomorphic to the group (under point-wise addition) $\mathbb{Z}^{(X)}$ of all mappings from $X$ to $\mathbb{Z}$ with a finite support (the support of a function $f\colon X\rightarrow \mathbb{Z}$ is the set of all $x\in X$ such that $f(x)\not=0$), see for instance~\cite{BouAlg}.  Let $Y=\{\, a,b\,\}$ be a two element set. Let $X=Y^{*}$, and $R=\{\, (uabv,ubav + uv)\colon u,v\in Y^{*}\,\}\subseteq X\times X^{\oplus}$. It is clear that $\mathsf{Irr}(X)=\{\ b^ia^j\colon i,j\in \mathbb{N}\,\}$. Moreover the alphabetic rewriting system $(X^{\oplus},\Rightarrow_R)$ is convergent (it is not difficult to check this property using for instance techniques from~\cite{Bergman}). Let $\theta=(X\times X)\setminus \Delta_X$. Then $\mathpzc{TG}(X,\theta,R)=\mathpzc{G}(\mathsf{Irr}(X),\theta_{\mathsf{Irr}(X)})=\mathbb{Z}^{(\mathsf{Irr}(X))}$. Therefore we recover the well-known fact (see~\cite{Kassel}) that the integral Weyl algebra $A_{\mathbb{Z}}=\Quo{\mathbb{Z}\langle a,b\rangle}{I_{[a,b]}}$ with two generators (where $\mathbb{Z}\langle a,b\rangle$ denotes the ring of the free monoid $X=Y^{*}=\{\,a,b\,\}^*$, and where $I_{[a,b]}$ is the two-sided ideal of $\mathbb{Z}\langle a,b\rangle$ generated by $ab-ba-1$) is free as an Abelian group with generators $\mathsf{Irr}(X)$. The universal Tutte-Grothendieck $R$-invariant $t$ of $(X^{\oplus},\Rightarrow_R)$ is the normal form of the words in $X=Y^{*}$. For instance, $t(babab)=b^2a^2 + 3 b^2a+b$. 

Let $c$ be a variable (distinct from $a,b$) and let $Y_c=Y\cup\{\,c\,\}=\{\,a,b,c\,\}$. Consider the relation $\theta=\{\, (x,c)\colon x\in Y\,\}\cup\{\, (c,x)\colon x\in Y\,\}$. Finally let $X_c=\mathpzc{S}(Y_c,\theta)$. Therefore $c$ commutes with all elements of $X_c$. Let $R_c=\{\, (uabv,ubav + uv)\colon u,v\in \mathpzc{S}(Y_c,\theta)\,\}\subseteq X_c\times X_c^{\oplus}$. Then we can check that $(X_c^{\oplus},\Rightarrow_{R_c})$ is a convergent alphabetic rewriting system whose Tutte-Grothendieck group is $\mathbb{Z}^{\mathsf{Irr}(X_c)}$ where $\mathsf{Irr}(X_c)=\{\, c^i b^j a^k\colon i,j,k\in\mathbb{N}\,\}$ (note that $c^i b^j a^k=c^{i_1}b^j c^{i_2} a^k c^{i_3}$ for every non-negative integers $i_1$, $i_2$, $i_3$, $i=i_1+i_2+i_3$, $j$ and $k$, since $c$ commutes with all other elements). This gives us immediately a free $\mathbb{Z}$-basis for the central extension $\Quo{\mathbb{Z}\langle a,b,c\rangle}{I_{[a,b],c}}$ (where $I_{[a,b],c}$ is the two-sided ideal of the ring $\mathbb{Z}\langle a,b,c\rangle$ of the monoid $Y^{*}_c$ generated by $ab-ba-1$ and $cx-xc$ for every $x\in \{\,a,b\,\}$) of the integral Weyl algebra $A_{\mathbb{Z}}$.  

\subsubsection{The Poincar\'e-Birkhoff-Witt theorem}

Let $\mathfrak{g}$ be a Lie algebra over some basis ring\footnote{$R$ is assumed to be associative, commutative, and has a multiplicative identity.} $R$ which is free as an $R$-module (see~\cite{BouLie}). Let $\mathcal{B}$ be a basis of $\mathfrak{g}$ seen as a (free) $R$-module. Let us assume that $\mathcal{B}$ is linearly ordered by $\leq$. Let $X=\mathcal{B}^*$ be the free monoid generated by $\mathcal{B}$. Let $R=\{\, (uhgv,ughv)\colon g,h\in\mathcal{B},\ g < h,\ u,v\in \mathcal{B}^*\,\}\subseteq X\times X^{\oplus}$. It is obvious that $(X^{\oplus},\Rightarrow_R)$ is a convergent alphabetic rewriting system. Moreover, $\mathsf{Irr}(X)=\{\, g_1\cdots g_n\colon n\geq 0,\ g_i\in\mathcal{B}\ \mbox{for all}\ 0\leq i\leq n,\ g_i\leq g_{i+1}\ \mbox{for all}\ 0\leq i<n\,\}$ and its Tutte-Grotendieck group is $\mathbb{Z}^{(\mathsf{Irr}(X))}$, while its universal Tutte-Grothendieck invariant $t$ is the re-ordering of an element of $X$ in an increasing order (relative to $\leq$). We recognize the  famous Poincar\'e-Birkhoff-Witt theorem (\cite{Birkhoff,Poincare, Witt}).

\subsubsection{Prefabs}

In~\cite{BenderGoldman}, Bender and Goldman introduced the notion of a prefab, for combinatorial purposes (computation of some generating functions). We recall here (a part of) this concept. Let $X$ be a set together with a multivalued binary operation $\circ$ (meaning that $x,y\in X$ implies that $x\circ y \subseteq X$) subjected to properties given below. For every $x,y\in X$, $x\circ y$ is a finite set. The operation $\circ$ is extended to the power set $2^{X}$ of $X$ by $A\circ B=\{\ z\in S\colon x\in x\circ y\ \mbox{for some}\ x\in A,\ y\in B\,\}$. If $x\in X$ and $A\subseteq X$, then we let $x\circ A$ be equal to $\{\,x\,\}\circ A=A\circ \{\,x\,\}$, and $x^i$ is defined by induction: $x^1=\{\, x\,\}$, and $x^{i+1}=x\circ x^i$ for every positive integer $i$. We say that $(X,\circ)$ is a \emph{prefab} if the composition $\circ$ on $2^X$ is associative, commutative (therefore $2^X$ becomes a semigroup), and has an identity\footnote{The identity plays also a r\^ole in counting arguments in~\cite{BenderGoldman}.} $i\in S$ such that $x\circ i=\{\, x \,\}=i\circ x$ for every $x\in X$ (then $2^X$ is a monoid). An element $p\in X\setminus\{\,i\,\}$ is called a \emph{prime} if $p\in x\circ y$ implies $x=i$ or $y=i$. We say that $(X,\circ)$ is a \emph{unique factorization} prefab if every $x\in X\setminus\{\,i\,\}$ factors uniquely into primes in the sense that $x\in p_1^{i_1}\circ \cdots \circ p_n^{i_n}$ for a unique set of $n>0$ primes $\{\,p_i\colon 1\leq i\leq n\,\}$ and unique positive integers $i_1,\cdots,i_n$. We say that $(X,\circ)$ is a \emph{very unique factorization} prefab if $x\in \displaystyle (p_1^{i_1}\circ \cdots\circ p_m^{i_m})\circ  (q_{1}^{j_{1}}\circ \cdots\circ  q_{n}^{j_{n}})$ where $m>0,n>0$, all the $i$'s and all the $j$'s are positive integers,  all the $p$'s are mutually distinct primes, and all the $q$'s are mutually distinct primes (but some $q$'s may be equal to some $p$'s), then there exist unique elements $y\in p_1^{i_1}\circ \cdots\circ p_m^{i_m}$ and $z\in q_{1}^{j_{1}}\circ \cdots\circ  q_{n}^{j_{n}}$ such that $x\in y\circ z$. In the original definition of a prefab, there is also a mapping $f\colon 2^X\rightarrow \mathbb{N}$ which serves as a weigth function for a combinatorial use but which is not needed here. 

Let $(Y,\circ)$ be a unique and very unique factorization prefab. Let $P$ be the set of primes of this prefab. Let $X=Y\setminus\{\, i\,\}$. Let $R=\{\, (x,y+z)\colon \exists y,z\in X,\ x\in y\circ z\,\}\subseteq X\times X^{\oplus}$. According to the properties of unique factorization, very unique factorization,  associativity and commutativity of $\circ$, it is clear that $(X^{\oplus},\Rightarrow_R)$ is a convergent alphabetic rewriting system. We have $\mathsf{Irr}(X)=P$, and the Tutte-Grothendieck group is, as expected, $\mathbb{Z}^{(P)}$. It is also immediate that $t(x)=\displaystyle \sum_{j=1}^n i_j p_j$ where  $p_1^{i_1}\circ \cdots \circ p_n^{i_n}$ is the unique prime factorization of $x$.  

As examples of (unique and very unique factorization) prefabs, one can cite the two following from~\cite{BenderGoldman}. 
Let $X$ be any set, and let $w,w^{\prime}\in X^{+}$ be two words. A \emph{shuffle} of these two words  is a word $w^{\prime\prime}=x_1\cdots x_n \in X^{+}$, $x_i\in X$ for $1\leq i\leq n$ (where $n$ is the sum of lengths of $w$ and $w^{\prime}$) such that there exists $\{\, I, J\,\}$ a partition of $\{\, 1\cdots, n\,\}$  with $w=x_{i_1}\cdots x_{i_k}$, $i_1<\cdots <i_k$, $k$ is the cardinal of $I$, $I=\{\, i_1,\cdots,i_k\,\}$, and $w^{\prime}=x_{j_1}\cdots x_{j_{\ell}}$, $j_1<\cdots<j_{\ell}$, $\ell$ is the cardinal of $J$, $J=\{\, j_1,\cdots,j_{\ell}\,\}$ (such constructions appear in the shuffle product of two words;  see~\cite{Reut}). Let $w\circ w^{\prime}$ be the set of all shuffles of $w$ and $w^{\prime}$. As an example, $w=\alpha\gamma$ and $w^{\prime}=\beta\beta$. Then $w\circ w^{\prime}=\{\, \alpha\gamma\beta\beta,\alpha\beta\gamma\beta,\alpha\beta\beta\gamma,\beta\alpha\gamma\beta,\beta\alpha\beta\gamma,\beta\beta\alpha\gamma\,\}$. It is clear that the identity is the empty word (therefore we allow to choose word in $X^{*}$) while the prime elements are the letters in $X$. The prime decomposition of a word is then the set of the letters that form the words. The rewriting system associated to this prefab is the following: $(w,w^{\prime}+w^{\prime\prime})$ where $w^{\prime}+w^{\prime\prime}\in (X^{+})^{\oplus}$ such that $w\in w^{\prime}\circ w^{\prime\prime}$.  To summarize, the set $\mathsf{Irr}(X^{+})$ is $X$, the Tutte-Grothendieck group is $\mathbb{Z}^{(X)}$, and the universal invariant is given by $t(w)=\displaystyle\sum_{x\in X}|w|_x x$ (which is sometimes called the \emph{commutative image}; see~\cite{Schutz}). 

Let $x_i$ be an indeterminate for each $i\in\mathbb{N}\setminus\{\, 0\,\}$ such that $x_i\not=x_j$ whenever $i\not=j$. Let $Y=\{\ x_i\colon i \geq 1\,\}$. Let $D(x_n)=\{\, \displaystyle \sum_{i>1}k_i x_i\in Y^{\oplus}\colon  k_i\in \mathbb{N},\  \forall i>1,\ k_i=0\ \mbox{except a finite number},\ \displaystyle\sum_{i >1}k_i i = n\,\}\subseteq Y^{\oplus}$.  Finally let us define $x_m\circ x_n = \{\ f+g\in Y^{\oplus}\colon f\in D(x_m),\ g\in D(x_n) \,\}$. For instance, $x_8\circ x_4 = \{\, 6x_2, 3x_2 + x_4, x_2+2x_4, 2x_2 + x_8, x_4+x_8\,\}$. The identity is $x_1$, while the primes are exactly the $x_p$ for $p\in\mathbb{P}$, where  $\mathbb{P}$ is the set of all prime integers. Attached with these datas, the rewriting system on $(Y\setminus\{\ x_1\,\})^{\oplus}$ is given by $R=\{\, (x_n,f)\colon f\in D(x_n)\,\}$. The Tutte-Grothendieck group is $\mathbb{Z}^{(\mathbb{P})}$, and the universal invariant is given by $t(x_m)=\displaystyle \sum_{i=1}^{\ell}k_i x_{p_i}$, where $p_1^{k_1}\cdots p_{\ell}^{k_{\ell}}$ is the decomposition of $m$ into prime numbers.

\appendix
\section{A short review of Brylawski's theory}\label{shortreviewBrylawski}

In this appendix are briefly presented the main definitions and results of Brylawski's theory that are extended and clarified in this contribution.

Let $X$ be a set, and let $D(X)\subseteq X^{\oplus}$. Let $(D(X),\leq)$ be a partially ordered set with $D(X)\subseteq X^{\oplus}$ such that 
\begin{enumerate}
 \item for every $f,g\in D(X)$, if $f\leq g$, then $|f| < |g|$ or $f=g$ (where $|f|=\displaystyle\sum_{x\in X}f(x)$),
 \item for every $f,g,h\in D(X)$, $f+g\leq h$ if, and only if, there exist $h_1,h_2$ in $D(S)$ such that $h_1+h_2=h$, $f\leq h_1$, and $g\leq h_2$.   
  \end{enumerate}
  A partial ordered set of this kind is called a \emph{decomposition} of $S$, and we say that $f$ \emph{decomposes} into $g$ when $f\leq g$. An element $x$ of $X\cap D(X)$ is said to be \emph{irreducible} if $x$ is maximal with respect to $\leq$. According to axiom 1, the elements of $X$ that belong to $D(X)$ are minimal with respect to $\leq$, therefore the irreducible elements are the incomparable elements. Let us denote by $\mathsf{Irr}(X)$ their totality. A decomposition $D(X)$ is said to be \emph{finite} when for every $x\in X$, there exists $f\in D(X)\cap \mathsf{Irr}(X)^{\oplus}\subseteq X^{\oplus}$ such that $x\leq f$ (we say that $x$ \emph{fully decomposed} into $f$); in particular $X\subseteq D(X)$. A decomposition $D(X)$ is said to be \emph{refinable} if $f\leq g$ and $f\leq h$ imply that there is $k\in D(X)$ such that $g\leq k$ and $h\leq k$. By the second axiom, an element $f\in D(X)$ cannot be decomposed into any other element of $D(X)$ if, and only if, it is an element of the free commutative semigroup $\mathsf{Irr}(X)^{\oplus}$ generated by the irreducible elements, that is, a finite linear combination of irreducible elements (with non negative integer coefficients). Hence, when $D(X)$ is refinable, for each $x\in X\cap D(X)$, there is at most one way to decompose $x$ into irreducibles (that is, to fully decompose $x$). In terms of rewriting systems, it is known as the property of confluence. Finally, if $D(X)$ is both refinable and finite, then any $x\in X$ as a unique decomposition into irreducibles.  This is precisely the property of convergence of a  (Noetherian and confluent) rewriting system. Let $G$ be any Abelian group, and $D(X)$ be any decomposition of $X$. A mapping $f\colon X\rightarrow G$ is said to be an \emph{invariant} when for every $x\leq \displaystyle\sum_{i=1}^k n_i x_i$ in $D(X)$, we have $f(x)=\displaystyle\sum_{i=1}^{k}n_i f(x_i)$. We are now in position to state Brylawski's main result (to compare to theorem~\ref{mainstatement}).  
 \begin{theorem}\cite{Brylawski}
 Let $D(X)$ be a finite and refinable decomposition of $X$. There exist an Abelian group $A$ and an invariant mapping $t\colon X\rightarrow A$ such that for every Abelian group and every invariant mapping $f\colon X\rightarrow G$, there exists a unique group homomorphism $h\colon A \rightarrow G$ such that the following diagram commutes. 
 \begin{equation}
\xymatrixcolsep{1pc}
\xymatrix{
X\ar[r]^{f} \ar[d]_{t} &G\\
A \ar[ur]_{h} &
}
\end{equation}
Moreover, $A$ is freely generated by $\mathsf{Irr}(X)$. 
 \end{theorem}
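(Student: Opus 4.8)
The plan is to mimic the construction behind Theorem~\ref{mainstatement} in Brylawski's purely commutative, order-theoretic setting: take for $A$ the free Abelian group generated by the set $\mathsf{Irr}(X)$ of irreducibles — so the last sentence of the statement holds by construction — and for $t$ the \emph{full decomposition} map. First I would record that, since $D(X)$ is finite (and so, by the ``in particular'' clause, $X\subseteq D(X)$), every $x\in X$ admits some $f\in D(X)\cap\mathsf{Irr}(X)^{\oplus}$ with $x\le f$; and that, since $D(X)$ is refinable and — by the second (additivity) axiom, as observed just before the statement — an element of $D(X)$ has no proper decomposition exactly when it lies in $\mathsf{Irr}(X)^{\oplus}$, such an $f$ is unique: two of them would have, by refinability, a common upper bound, which each must equal. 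Write $\tau(x)$ for this unique element ($\tau(x)=x$ when $x$ is irreducible), identify $\mathsf{Irr}(X)^{\oplus}$ with the sub-semigroup of $A$ spanned by $\mathsf{Irr}(X)$, and let $t\colon X\to A$ be $\tau$ followed by this inclusion.

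The core of the argument is the lemma that $\tau$ is additive along decompositions: whenever $x\le\sum_{i=1}^{k}n_i x_i$ in $D(X)$ one has $\tau(x)=\sum_{i=1}^{k}n_i\tau(x_i)$ (and, along the way, every such sum itself has a unique full decomposition, namely $\sum_i n_i\tau(x_i)$). I would prove it by Noetherian induction, axiom~1 guaranteeing that $|\cdot|$ strictly increases along proper decompositions and hence that $\le$ is well-founded. The inductive step peels off one summand: first note $x\le\sum n_i x_i\le\tau(x)$ (using refinability and uniqueness as above), then apply the additivity axiom to the splitting $\sum n_i x_i = x_1 + (\sum n_i x_i - x_1)$ to write $\tau(x)=h_1+h_2$ with $x_1\le h_1$ and $\sum n_i x_i - x_1\le h_2$; since a sub-sum of an element of $\mathsf{Irr}(X)^{\oplus}$ is again in $\mathsf{Irr}(X)^{\oplus}$, the pieces $h_1,h_2$ are themselves fully decomposed, so $h_1=\tau(x_1)$ by uniqueness and $h_2$ is the full decomposition of $\sum n_i x_i - x_1$, to which the induction hypothesis applies. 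Granting the lemma, $t$ is an invariant by transport of the additivity into $A$, and for an irreducible $p$ we have $t(p)=p$, the corresponding free generator.

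The universal property is then formal. Let $G$ be an Abelian group and $f\colon X\to G$ an invariant. Any homomorphism $h\colon A\to G$ with $h\circ t=f$ must satisfy $h(p)=f(p)$ for every generator $p\in\mathsf{Irr}(X)$ (because $t(p)=p$), and such $h$ is unique by freeness; conversely, let $h$ be the unique homomorphism with $h(p)=f(p)$ for all $p\in\mathsf{Irr}(X)$. For arbitrary $x\in X$, the equality $\tau(x)=\sum n_i p_i$ with the $p_i\in\mathsf{Irr}(X)$ reads exactly as $x\le\sum n_i p_i$ in $D(X)$, so invariance of $f$ yields $f(x)=\sum n_i f(p_i)=\sum n_i h(p_i)=h(\tau(x))=h(t(x))$; thus $h\circ t=f$. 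This gives both existence and uniqueness of $h$, and ``$A$ freely generated by $\mathsf{Irr}(X)$'' is by construction.

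The one genuinely delicate point is the additivity lemma for $\tau$, and inside it the bookkeeping that the intermediate sum $\sum n_i x_i - x_1$ and the pieces $h_1,h_2$ furnished by the additivity axiom all actually lie in $D(X)$ — this is precisely where Brylawski's exact formulation of the decomposition axioms does the work; everything else is routine. As a sanity check, the statement is morally the fully commutative specialization of Theorem~\ref{mainstatement}: with $\theta=(X\times X)\setminus\Delta_X$ and $R$ chosen to encode the one-step decompositions of $D(X)$, the pair $(\mathpzc{S}(X,\theta),\Rightarrow_R)$ becomes a convergent alphabetic rewriting system whose Tutte-Grothendieck group $\mathpzc{TG}(X,\theta,R)\cong\mathpzc{G}(\mathsf{Irr}(X),\theta_{\mathsf{Irr}(X)})=\mathbb{Z}^{(\mathsf{Irr}(X))}$ and whose universal invariant is the normal form; but making the encoding precise — matching $\le$ with $\xRightarrow{*}_R$ on $D(X)\subsetneq X^{\oplus}$ and the two notions of invariance — costs about as much care as the direct proof sketched above.
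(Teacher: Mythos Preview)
The paper does not itself prove this theorem: it is merely stated in Appendix~\ref{shortreviewBrylawski} as Brylawski's result, with a citation, so that the reader can compare it to the paper's own generalization, Theorem~\ref{mainstatement}. There is therefore no ``paper's proof'' to set your argument against.

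On its own merits your direct proof is sound and is essentially the standard one. You correctly extract from finiteness and refinability that each $x\in X$ has a unique full decomposition $\tau(x)\in\mathsf{Irr}(X)^{\oplus}$, you isolate the additivity lemma $\tau(x)=\sum_i n_i\tau(x_i)$ whenever $x\le\sum_i n_i x_i$ as the only substantive step, and you derive the universal property from freeness of $A=\mathbb{Z}^{(\mathsf{Irr}(X))}$ together with $t(p)=p$ on irreducibles. Your own caveat is exactly the right one: the bookkeeping that the intermediate sum $\sum n_i x_i - x_1$ and the pieces $h_1,h_2$ furnished by axiom~2 actually lie in $D(X)$ is where Brylawski's precise axiomatics (the ``if and only if'' form of axiom~2) is doing real work, and this deserves to be spelled out rather than waved at.

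Your closing remark is also on target: the statement is the fully commutative, convergent specialization of Theorem~\ref{mainstatement}, and once the dictionary between $\le$ on $D(X)$ and $\xRightarrow{*}_R$ on $X^{\oplus}$ is set up, one could alternatively deduce Brylawski's theorem from that result with $\theta=(X\times X)\setminus\Delta_X$; the paper's point is precisely that the rewriting-system formulation subsumes and clarifies Brylawski's order-theoretic one.
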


\end{document}